\newtheorem{remark}{\bf Remark}
\newtheorem{lemma}{\bf Lemma}
\newtheorem{theorem}{\bf Theorem}
\newcommand{\mbf}[1]{\mbox{\boldmath$#1$}}
\newcommand{\sbv}{\mbox{\boldmath{\scriptsize $v$}}}
\newcommand{\sbs}{\mbox{\boldmath{\scriptsize $\sigma$}}}
\newcommand\crule[3][black]{\textcolor{#1}{\rule{#2}{#3}}}
\numberwithin{equation}{section}
\numberwithin{definition}{section}
\numberwithin{lemma}{section}
\numberwithin{remark}{section}
\numberwithin{corollary}{section}
\numberwithin{theorem}{section}
\numberwithin{example}{section}
\begin{document}
	
\title{Optimization and variational principles for the shear strength reduction method}
\author{Stanislav Sysala$^1$\footnote{corresponding author, email: \texttt{stanislav.sysala@ugn.cas.cz}}, Eva Hrube\v sov\'a$^{2,1}$, Zden\v ek Michalec$^{1}$, Franz Tschuchnigg$^{3}$ \\ \\
\small$^1$Institute of Geonics of the Czech Academy of Sciences, Ostrava, Czech Republic\\
\small$^2$V\v SB - Technical University of Ostrava, Faculty of Civil Engineering, Ostrava, Czech Republic\\
\small$^3$Institute of Soil Mechanics, Foundation Engineering and Computational Geotechnics,\\ 
\small Graz University of Technology, Graz, Austria}

\maketitle

\begin{abstract}
This paper is focused on the definition, analysis and numerical solution of a new optimization variant (OPT) of the shear strength reduction (SSR) problem with applications to slope stability problems. This new variant is derived on the basis of recent results by Tschuchnigg et al. 2015, where limit analysis and a modified Davis approach were used for approximation of the standard SSR method. The OPT-SSR method computes the factor of safety without performing an elasto-plastic analysis, similarly as in limit analysis. It is shown that this optimization problem is well-defined. Next, the  duality between the static and kinematic principles of OPT-SSR is derived. For the numerical solution, a regularization method is introduced and analyzed. This method is combined with the finite element method, mesh adaptivity and a damped Newton method. In-house codes (Matlab) are used for the implementation of this solution concept. Finally, two slope stability problems are considered, one of which follows from analysis of a real slope. The softwares packages Plaxis and Comsol Multiphysics are used for comparison of the results.
\end{abstract}

{\bf Keywords:}  slope stability, shear strength reduction method, convex optimization, static and kinematic principles, regularization, finite elements and mesh adaptivity



\section{Introduction}
\label{sec_intro}

This paper deals with slope stability assessment, which includes the determination of the factor of safety (FoS) and the estimation of failure zones (slip surfaces) for a critical state of the slope. Limit equilibrium (LE), shear strength reduction (SSR) or limit analysis (LA) can be used for the determination of FoS. The methods arise from elastic-perfectly plastic models containing mainly the Mohr-Coulomb yield criterion.

The LE method is based on predefined failure zones, see, for example, \cite{D96}, \cite{Y06}. It does not necessarily require numerical computation, and neither stress equilibrium at every point in the domain around the slope. Due to these facts, this method is simple and widely used in geotechnical practice. On the other hand, accuracy of the solution cannot be easily verified, especially if anisotropic or inhomogeneous materials are considered or if a complex geometry is defined.

The SSR method \cite{ZHL1975,BB91},\\ \cite{DRD99,GL99} has been suggested mainly for the slope stability assessment. It is a conventional method based on a displacement variant of the finite element method (FEM) and on reduction of strength parameters defining the Mohr-Coulomb model. It has also been implemented within some commercial softwares like Plaxis \cite{BSE11} or Middas GTS NX. Strong dependence on the finite element mesh density or even a nonunique determination of FoS can occur in SSR with a non-associated plastic flow rule, see \cite{TSSLR2015,TSS2015b}.

LA is a universal method that can be used for various stability problems, not only for the ones in geotechnical practice. FoS is derived from a critical (limit) value of the load factor. Originally, this method was purely analytical, see, for example, \cite{CL90}, \cite{MD09}. Now, it is rather a numerical method based on optimization, duality between kinematic and static principles and performed within the framework of FEM \cite{Ch96,Sl13,Y06,HRS19}. LA is supported by mathematical and numerical analyses, see \cite{T85},\\ \cite{Ch96,HRS15,RSH18,HRS19} and has been implemented, for example, within the software OPTUMG2 and OPTUMG3 \cite{KLK16}. In slope stability analyses, however, the FoS is defined according to  strength parameters. Therefore, an iterative modification of LA was suggested, see e.g. \cite{Sl13}. In addition, there is also no rigorous solution in the case of non-associated plasticity.

The usage of the non-associated plastic flow rule is supported by laboratory experiments and enables to control the inelastic volume changes of compacted dense granular soil materials or overconsolidated fine-grained soils material subjected to shearing, see, for example, \cite{VB84, Sch05}. On the other hand, mathematical theory of non-associated elastic-plastic problems is missing or at least incomplete. Especially, standard implicit discretizations of time (pseudo-time) variables lead to problematic numerical behavior. The drawbacks of non-associated models can be suppressed by variational approaches based on theory of bipotentials \cite{HFS03, HQS17} or semi-implicit time schemes \cite{KKLS2012}. Within the LA method, Davis \cite{D68} suggested to modify the strength parameters and consequently approximate the non-associated model by the associated one. In\\ \cite{TSSLR2015,TSS2015b}, the Davis approach has been modified for purposes of the SSR method. The modification leads to an iterative solution scheme based on LA that was originally suggested in \cite{Sl13}. We also refer to recent papers \cite{TSS2015c, OTS18} for comparison of the standard and the modified SSR method.

The drawback of the modified SSR method developed in \cite{TSS2015b} is related to the fact that the FoS is defined iteratively. The aim of this paper is to propose a direct optimization variant of the strength reduction method (OPT-SSR). This proposed procedure defines FoS without performing an elasto-plastic analysis. Consequently, one can study OPT-SSR regardless its space discretization and thus, this approach has a potential to be completed by a rigorous mathematical theory (as in LA). We show in the following some basic properties of the optimization problem and derive the duality between static and kinematic settings of this problem.

The mentioned iterative scheme from \cite{Sl13, TSS2015b} can be used for the solution of the OPT-SSR problem. However, the LA problem has to be solved in each iteration of this scheme. In addition, mesh adaptivity usually completes LA solvers (see \cite{Sl13, HRS19}) and its repetitive construction is expensive. Therefore, we propose a regularization method for the solution of OPT-SSR which is more straightforward than the LA approach. The used regularization is inspired by recent papers \cite{SHHC15,CHKS15,HRS15}, \cite{HRS16, RSH18, HRS19}, and\\ \cite{RS20}. Convergence with respect to the regularization parameter is analyzed in order to relate the OPT-SSR problem with its regularized counterpart. The regularization enables to solve the problem with standard finite element methods and with the damped Newton method as suggested in \cite{S12}. Mesh adaptivity is also used to compute more accurate results.

The problem is implemented within in-house Matlab codes. These codes (based on elastic-plastic solvers) have been systematically developed and described in \cite{SCKKZB16,SCL17,CSV19}. Some of the codes are available for download \cite{CSV18}. For comparison of the results with standard approaches, the commercial softwares Plaxis and Comsol Multiphysics are used. 

The paper is organized as follows. In Section \ref{sec_preliminaries}, we introduce preliminaries related to the standard SSR method. In Section \ref{sec_SSR}, the OPT-SSR problem is introduced for associated plasticity. Section \ref{sec_nonassoc} is devoted to the extension of OPT-SSR to non-associated plasticity. The extended problem is formulated for three different Davis modifications as suggested in \cite{TSS2015b} and the corresponding safety factors are compared. In Section \ref{sec_LA}, the LA approach for the solution of OPT-SSR is introduced. It enables to relate the OPT-SSR method to the approaches from \cite{Sl13, TSS2015b}. In Section \ref{sec_duality}, variational principles, duality and the kinematic approaches to the OPT-SSR method are presented. The regularization method, which is built on the variational principals, is introduced and analyzed in Section \ref{sec_regularization}. Numerical examples illustrating the efficiency of the suggested numerical methods are presented in Section \ref{sec_examples}. Concluding remarks are given in Section \ref{sec_conclusion}. The appendix contains a closed form of a regularized dissipative function for the Mohr-Coulomb yield criterion.


\section{Preliminaries to the standard SSR method}
\label{sec_preliminaries}

The standard SSR method is based on the elastic-perfectly plastic problem including the Mohr-Coulomb yield criterion. For the complete definition of this problem, we refer to \cite{NPO08, SCL17}. Such a model includes the elastic material parameters (Young's modulus and Poisson's ratio) and the following strength parameters: the effective cohesion ($c'$), the effective friction angle ($\phi'$), and the dilatancy angle ($\psi'$). It is assumed that $\psi'\leq\phi'$. In case of $\psi'=\phi'$, we arrive at an associated flow rule.

The SSR method is based on the reduction of the strength parameters $c'$, $\phi'$ and $\psi'$:
\begin{equation}
c_\lambda:=\frac{c'}{\lambda},\quad  \phi_\lambda:=\arctan\frac{\tan\phi'}{\lambda},\quad \psi_\lambda:=\arctan\frac{\tan\psi'}{\lambda},
\label{reduction}
\end{equation}
where $\lambda>0$ is the reduction parameter. Alternatively, one can use the following formula for $\psi_\lambda$ (see also \cite{TSS2015b}):
\begin{equation}
\psi_\lambda:=\psi\; \mbox{ until } \psi<\phi_\lambda,\; \mbox{ then }\psi_\lambda:=\phi_\lambda.
\label{psi_lambda}
\end{equation}
FoS for the SSR method is defined as a maximum of $\lambda$ for which the elastic-perfectly plastic problem has a solution with respect to the parameters $c_\lambda$, $\phi_\lambda$, and $\psi_\lambda$. This definition is from the mathematical point of view rather formal, because the solvability of the elasto-plastic problem requires to introduce convenient functional spaces and a weak form of the problem (see, e.g., \cite{HR12}). Such an analysis is problematic for the case of the non-associated plasticity.

Finally, we introduce an appropriate form of the Mohr-Coulomb yield criterion convenient for the analysis of the OPT-SSR method presented below. Arising from the well-known formulas
\begin{equation}
\cos\phi'=\frac{1}{\sqrt{1+\tan^2\phi'}},\quad \sin\phi'=\frac{\tan\phi'}{\sqrt{1+\tan^2\phi'}},
\label{tan}
\end{equation}
we arrive at the following form of the criterion for $c_\lambda$ and $\phi_\lambda$:
\begin{equation}
(\sigma_1-\sigma_3)\sqrt{1+\tan^2\phi_\lambda}+(\sigma_1+\sigma_3)\tan\phi_\lambda-2c_\lambda\leq 0,
\label{MC2}
\end{equation}
where $\sigma_1$ and $\sigma_3$ denote the maximal and minimal principle effective stresses of the Cauchy stress tensor $\mbf\sigma$, respectively. Notice that the standard mechanical sign convention has been used. Using \eqref{reduction} and multiplying \eqref{MC2} by $\lambda$, one can rewrite this inequality into the form $\Phi(\lambda;\mbf\sigma)\leq0$, where
\begin{equation}
\Phi(\lambda;\mbf\sigma):=(\sigma_1-\sigma_3)\sqrt{\lambda^2+\tan^2\phi'}+(\sigma_1+\sigma_3)\tan\phi'-2c'.
\label{Phi_lambda}
\end{equation}

\section{The OPT-SSR method in associated plasticity}
\label{sec_SSR}

Inspired by the LA method, we introduce an optimization variant of the SSR method (OPT-SSR) for the associated model with $\phi'=\psi'$ in the following form:

\smallskip\noindent
\textit{$\lambda^*_{ass}=$ supremum of $\lambda\geq0$ subject to}
\begin{equation}
\left.\begin{array}{c}
-\mathrm{div}\, \mbf\sigma=\mbf F\;\mbox{ in }\Omega,\;\;\mbf\sigma\mbf n=\mbf f\;\mbox{ on }\partial\Omega_f,\\[2mm]
\Phi(\lambda;\mbf\sigma)\leq0\; \mbox{ in }\Omega.
\end{array}\right\}
\label{SSR_FoS2}
\end{equation}
Here, $\lambda^*_{ass}$ denotes FoS for the associated OPT-SSR method, $\Omega$ is a bounded domain in 2D and 3D representing an investigated body, $\mbf F$ is a volume force (e.g. the weight of the body), $\mbf f$ is a prescribed surface force acting on the part $\partial \Omega_f$ of the boundary $\partial\Omega$, $\mbf n$ denotes the outward unit normal to the boundary $\partial\Omega$, and the function $\Phi(\lambda;\mbf\sigma)$ is defined by \eqref{Phi_lambda}. The constraints on the first and second lines of \eqref{SSR_FoS2} represent statically and plastically admissible stress fields, respectively. According to the literature on convex analysis \cite{ET74,T85,Ch96}, we rather use the supremum than the maximum in this definition, because for the critical value $\lambda^*_{ass}$, the admissible stress $\mbf\sigma$ satisfying \eqref{SSR_FoS2} does not need to exist on functional spaces. Although the definition admits the case $\lambda^*_{ass}=+\infty$, one can expect that $\lambda^*_{ass}$ is finite in geotechnical boundary value problems.

The following statement implies that \eqref{SSR_FoS2} holds for any $\lambda\geq0$ such that $\lambda<\lambda^*_{ass}$. Without this basic property, it would be very difficult to find $\lambda^*_{ass}$. More advanced analysis of the OPT-SSR problem can be found in Section \ref{sec_duality}.

\begin{lemma}
If \eqref{SSR_FoS2} is satisfied for some $\lambda:=\overline\lambda>0$ then \eqref{SSR_FoS2} holds for any $\lambda<\overline\lambda$.
\label{lem_well_def}
\end{lemma}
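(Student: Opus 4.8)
The plan is to show that the feasible set defined by the constraints in \eqref{SSR_FoS2} is monotone in $\lambda$: if a stress field $\mbf\sigma$ satisfies the constraints for $\overline\lambda$, then the \emph{same} $\mbf\sigma$ satisfies them for every $\lambda<\overline\lambda$. Since the equilibrium constraints on the first line of \eqref{SSR_FoS2} (namely $-\mathrm{div}\,\mbf\sigma=\mbf F$ in $\Omega$ and $\mbf\sigma\mbf n=\mbf f$ on $\partial\Omega_f$) do not involve $\lambda$ at all, any statically admissible field remains statically admissible. Hence the entire argument reduces to the plastic admissibility constraint $\Phi(\lambda;\mbf\sigma)\leq0$, and it suffices to prove that the left-hand side is nondecreasing in $\lambda$ for each fixed $\mbf\sigma$.

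First I would inspect the explicit formula \eqref{Phi_lambda},
\begin{equation*}
\Phi(\lambda;\mbf\sigma)=(\sigma_1-\sigma_3)\sqrt{\lambda^2+\tan^2\phi'}+(\sigma_1+\sigma_3)\tan\phi'-2c'.
\end{equation*}
Only the first term depends on $\lambda$, and its $\lambda$-dependence enters solely through the factor $\sqrt{\lambda^2+\tan^2\phi'}$, which is a strictly increasing function of $\lambda$ for $\lambda\geq0$. The coefficient multiplying it is $\sigma_1-\sigma_3$, which is nonnegative because $\sigma_1$ and $\sigma_3$ are by definition the maximal and minimal principal stresses, so $\sigma_1\geq\sigma_3$. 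Therefore the map $\lambda\mapsto\Phi(\lambda;\mbf\sigma)$ is monotonically nondecreasing for every admissible $\mbf\sigma$. Consequently, if $\lambda<\overline\lambda$ then $\Phi(\lambda;\mbf\sigma)\leq\Phi(\overline\lambda;\mbf\sigma)\leq0$ pointwise in $\Omega$, which is exactly the plastic admissibility required for $\lambda$.

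Combining the two observations, the field $\mbf\sigma$ that witnesses feasibility at $\overline\lambda$ witnesses feasibility at every $\lambda<\overline\lambda$, so \eqref{SSR_FoS2} holds for all such $\lambda$. I would write this out cleanly by fixing an arbitrary $\lambda\in[0,\overline\lambda)$, citing the $\lambda$-independence of the equilibrium line, and then invoking the monotonicity of $\sqrt{\lambda^2+\tan^2\phi'}$ together with $\sigma_1-\sigma_3\geq0$ to conclude $\Phi(\lambda;\mbf\sigma)\leq0$. I do not anticipate a genuine obstacle here: the result is essentially a sign-and-monotonicity check. The only point deserving explicit care is the nonnegativity of $\sigma_1-\sigma_3$, which is not an extra hypothesis but follows directly from the convention (recorded after \eqref{MC2}) that $\sigma_1$ and $\sigma_3$ denote the maximal and minimal principal stresses; it is worth stating this explicitly so the monotonicity argument is not mistaken for requiring a sign assumption on the stresses themselves.
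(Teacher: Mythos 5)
Your proposal is correct and follows essentially the same route as the paper's own proof: both arguments rest on the observation that $\sigma_1-\sigma_3\geq0$ makes $\lambda\mapsto\Phi(\lambda;\mbf\sigma)$ nondecreasing, so the same stress field witnessing feasibility at $\overline\lambda$ remains feasible for every $\lambda<\overline\lambda$. Your explicit remarks on the $\lambda$-independence of the equilibrium constraints and on why $\sigma_1-\sigma_3\geq0$ needs no extra hypothesis are fine elaborations of what the paper leaves implicit.
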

\begin{proof}
For any $\mbf\sigma$ fixed, $\sigma_1-\sigma_3\geq0$ and thus the function $\lambda\mapsto\Phi(\lambda;\mbf\sigma)$ is nondecreasing. Hence, if there exists $\mbf\sigma$ such that \eqref{SSR_FoS2} holds for some $\lambda:=\overline\lambda>0$
then for any $\lambda<\overline\lambda$, we have
\begin{equation}
\Phi\big(\lambda;\mbf\sigma\big)\leq\Phi\big(\bar\lambda;\mbf\sigma\big)\leq0\; \mbox{ in }\Omega.
\end{equation} 
This implies the statement of the lemma.
\end{proof}

\begin{remark}
\emph{Let us recall that the Mohr-Coulomb yield surface is the pyramid aligned with the hydrostatic axis, see, for example, \cite{NPO08}. From the inequality $\Phi(\lambda;\mbf\sigma)\leq0$ and \eqref{Phi_lambda}, it is possible to see that the apex of this pyramid is independent of $\lambda$. By reducing the strength parameters (i.e., by enlarging $\lambda$) the slope of the Mohr-Coulomb pyramid is reduced. For $\lambda\rightarrow0$, the pyramid varies to a half-space.}
\end{remark}

\section{The OPT-SSR method in non-associated plasticity}
\label{sec_nonassoc}

The aim of this section is to extend the OPT-SSR method from Section \ref{sec_SSR} to the non-associated model where $\psi'<\phi'$. To this end, we use the Davis approach and its modifications suggested for the SSR and LA methods, see \cite{TSS2015b}. In particular, three different approaches denoted as Davis A, Davis B and Davis C are distinguished in \cite{TSS2015b} and are considered in this paper. 

In general, we propose the reduction of the strength parameters $c'$ and $\tan\phi'$ by the following scheme:
\begin{equation}
\frac{c'}{q(\lambda)},\quad \frac{\tan\phi'}{q(\lambda)},\quad q(\lambda):=\frac{\lambda}{\beta(\lambda)}.
\label{Davis}
\end{equation}
Different definitions of the function $\beta$ for the Davis A-C approaches (according their advantages and disadvantages) are introduced in \cite{TSS2015b}. Therefore, we present directly the corresponding functions $q$, see Sections \ref{subsec_Davis_A}--\ref{subsec_Davis_C} listed below. The extended OPT-SSR problem is in the following abstract form:

\smallskip\noindent
\textit{$\omega^*=$ supremum of $\lambda\geq0$ subject to}
\begin{equation}
\left.\begin{array}{c}
-\mathrm{div}\, \mbf\sigma=\mbf F\;\mbox{ in }\Omega,\;\;\mbf\sigma\mbf n=\mbf f\;\mbox{ on }\partial\Omega_f,\\[2mm]
\Phi\big(q(\lambda);\mbf\sigma\big)\leq0\; \mbox{ in }\Omega,
\end{array}\right\}
\label{SSR_FoS_N}
\end{equation}
where $\omega^*$ denotes FoS and the function $\Phi$ is defined by \eqref{Phi_lambda}, that is,
\begin{equation}
\Phi(q(\lambda);\mbf\sigma):=(\sigma_1-\sigma_3)\sqrt{q^2(\lambda)+\tan^2\phi'}+(\sigma_1+\sigma_3)\tan\phi'-2c'.
\label{Phi_lambda2}
\end{equation}
Setting $q:=q_{ass}$ where $q_{ass}(\lambda)=\lambda$, we arrive at the associated OPT-SSR problem studied in Section \ref{sec_SSR} and thus $\omega^*=\lambda^*_{ass}$ for this case. Next, we have the following extension of Lemma \ref{lem_well_def}.
\begin{lemma}
Let the function $q$ be non-decreasing. If \eqref{SSR_FoS_N} is satisfied for some $\lambda:=\overline\lambda>0$ then \eqref{SSR_FoS_N} holds for any $\lambda<\overline\lambda$.
\end{lemma}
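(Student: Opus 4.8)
The plan is to reduce this to exactly the structure already used in the proof of Lemma \ref{lem_well_def}, since the only new ingredient is the intermediate reduction map $q$. First I would fix a stress field $\mbf\sigma$ that satisfies \eqref{SSR_FoS_N} at $\lambda=\overline\lambda$ and observe that the equilibrium constraints on the first line of \eqref{SSR_FoS_N}, namely $-\mathrm{div}\,\mbf\sigma=\mbf F$ in $\Omega$ and $\mbf\sigma\mbf n=\mbf f$ on $\partial\Omega_f$, do not depend on $\lambda$ at all; hence they continue to hold for every $\lambda<\overline\lambda$ automatically. Consequently the whole argument reduces to verifying the plastic-admissibility inequality $\Phi\big(q(\lambda);\mbf\sigma\big)\leq0$ in $\Omega$ for $\lambda<\overline\lambda$.

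The key step is to show that, for each fixed $\mbf\sigma$, the map $\lambda\mapsto\Phi\big(q(\lambda);\mbf\sigma\big)$ is non-decreasing on $[0,\infty)$, which I would obtain purely by composition. From the proof of Lemma \ref{lem_well_def}, the map $t\mapsto\Phi(t;\mbf\sigma)$ is non-decreasing for $t\geq0$, because $\sigma_1-\sigma_3\geq0$ and $t\mapsto\sqrt{t^2+\tan^2\phi'}$ is non-decreasing on $[0,\infty)$. Combining this with the hypothesis that $q$ is non-decreasing, the composition inherits monotonicity, so that for $\lambda<\overline\lambda$
\begin{equation}
\Phi\big(q(\lambda);\mbf\sigma\big)\leq\Phi\big(q(\overline\lambda);\mbf\sigma\big)\leq0\;\mbox{ in }\Omega,
\end{equation}
which is precisely the claimed statement.

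The one point requiring care — the would-be obstacle — is that the monotonicity of $t\mapsto\Phi(t;\mbf\sigma)$ holds only on $[0,\infty)$, since the square-root term is monotone in $|t|$ rather than in $t$ itself; the composition argument therefore needs $q$ to map $[0,\infty)$ into $[0,\infty)$. I would check this directly from the definition $q(\lambda)=\lambda/\beta(\lambda)$ in \eqref{Davis}: for $\lambda\geq0$ and $\beta$ positive (as is the case for each of the Davis A--C choices), one has $q(\lambda)\geq0$, and together with the monotonicity of $q$ this yields $0\leq q(\lambda)\leq q(\overline\lambda)$, justifying the displayed estimate. With this sign condition confirmed, the proof is complete.
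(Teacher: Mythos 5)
Your proof is correct and follows exactly the route the paper intends: the paper gives no separate proof for this lemma, presenting it as an immediate extension of Lemma \ref{lem_well_def}, whose argument (monotonicity of $t\mapsto\Phi(t;\mbf\sigma)$ from $\sigma_1-\sigma_3\geq0$) you reuse via composition with the non-decreasing $q$. Your observation that the composition argument additionally needs $q\geq0$ is a legitimate point of care; the paper handles it by separately assuming non-negativity of $q$ in the sentence following the lemma and verifying it for each of $q_A$, $q_B$, $q_C$, which matches your check.
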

One can see that it is important to be $q$ non-decreasing. Besides, it is reasonable to assume that the function $q$ is also non-negative and continuous. Now, we shall introduce this function to the Davis A-C approaches and verify the above mentioned properties. To distinguish the particular Davis approaches, we use the notation $q_A$, $q_B$, and $q_C$ instead of $q$.

\subsection{The approach Davis A}
\label{subsec_Davis_A}

We set 
\begin{equation}
q_A(\lambda)=\lambda\,\frac{1-\sin\psi'\sin\phi'}{\cos\psi'\cos\phi'}.
\label{q_A}
\end{equation}
Clearly, the function $q_A$ is increasing, non-negative and continuous. Using the formulas \eqref{tan}, one can also write:
\begin{equation}
q_A(\lambda)=\lambda\left[\sqrt{(1+\tan^2\psi')(1+\tan^2\phi')}-\tan\psi'\tan\phi'\right].
\label{q_A2}
\end{equation}
We denote FoS for the function $q_A$ by $\lambda^*_A$, that is, $\omega^*=\lambda^*_A$ for $q:=q_A$.

\subsection{The approach Davis B}
\label{subsec_Davis_B}

We set
\begin{equation}
q_B(\lambda)=\lambda\,\frac{1-\sin\psi_\lambda\sin\phi_\lambda}{\cos\psi_\lambda\cos\phi_\lambda}
\label{B}
\end{equation}
where the functions $\lambda\mapsto\phi_\lambda$ and $\lambda\mapsto\psi_\lambda$ are defined by $\eqref{reduction}$, that is,
\begin{equation}
\tan\phi_\lambda:=\frac{\tan\phi'}{\lambda},\quad \tan\psi_\lambda:=\frac{\tan\psi'}{\lambda}.
\label{reduction_B}
\end{equation}
Using the formulas \eqref{tan} and \eqref{reduction_B}, we derive subsequently:
\begin{equation}
q_B(\lambda)=\frac{1}{\lambda}\left[\sqrt{(\lambda^2+\tan^2\psi')(\lambda^2+\tan^2\phi')}-\tan\psi'\tan\phi'\right],
\label{q_B2}
\end{equation}
\begin{equation}
\frac{dq_B(\lambda)}{d\lambda}=\frac{1+\frac{1}{\lambda^2}[\sqrt{(\lambda^2+\tan^2\psi')(\lambda^2+\tan^2\phi')}-\tan\psi'\tan\phi']}{\sqrt{(\lambda^2+\tan^2\psi')(\lambda^2+\tan^2\phi')}}>0.
\end{equation}
It implies that the function $q_B$ is increasing, non-negative and continuous. We denote FoS for the function $q_B$ by $\lambda^*_B$, that is, $\omega^*=\lambda^*_B$ for $q:=q_B$. Due to the fact that the difference between $\phi_\lambda$ and $\psi_\lambda$ defines the amount of non-associativity, the methods Davis B is considered to be more appropriate compared to Davis A and also Davis C (mentioned below).

\subsection{The approach Davis C}
\label{subsec_Davis_C}

We set
\begin{equation}
q_C(\lambda)=\left\{\begin{array}{cc}
\lambda\,\frac{1-\sin\psi'\sin\phi_\lambda}{\cos\psi'\cos\phi_\lambda}, & \mbox{if } \phi_\lambda\geq\psi',\\[2mm]
\lambda, & \mbox{if } \phi_\lambda\leq\psi',
\end{array}\right.
\label{b}
\end{equation}
where the function $\lambda\mapsto\phi_\lambda$ is defined by $\eqref{reduction_B}$. Using the formulas \eqref{tan}, we derive:
\begin{equation}
	q_C(\lambda)=\left\{\begin{array}{cc}
	\sqrt{(\lambda^2+\tan^2\phi')(1+\tan^2\psi')}-\tan\phi'\tan\psi', & \mbox{if } \tan\phi'\geq\lambda\tan\psi',\\[2mm]
	\lambda, & \mbox{if } \tan\phi'\leq\lambda\tan\psi',
	\end{array}\right.
	\label{q_C2}
\end{equation}
Hence, the function $q_C$ is non-decreasing, non-negative and continuous. We denote FoS for the function $q_C$ by $\lambda^*_C$, that is, $\omega^*=\lambda^*_C$ for $q:=q_C$. 

\subsection{Comparison of the safety factors $\lambda^*_{ass}$, $\lambda^*_A$, $\lambda^*_B$, and $\lambda^*_C$}

It will be shown that the presented OPT-SSR approach enables to compare analytically the values $\lambda^*_{ass}$, $\lambda^*_A$, $\lambda^*_B$, and $\lambda^*_C$ of FoS for the associative model and Davis A-C models, respectively. First, we compare the corresponding functions $q_{ass}$, $q_A$, $q_B$, and $q_C$.

\begin{lemma}
The following statements hold:
\begin{enumerate}
\item $q_A\geq q_{ass}$, $q_B\geq q_{ass}$, $q_C\geq q_{ass}$;
\item $q_A(1)=q_B(1)=q_C(1)$;
\item $q_A(\lambda)\geq q_B(\lambda)\geq q_C(\lambda)\;$ for any $\;\lambda\geq 1$;
\item $q_C(\lambda)\geq q_B(\lambda)\geq q_A(\lambda)\;$ for any $\;\lambda\leq 1$.
\end{enumerate}
\label{lem_q_compare}
\end{lemma}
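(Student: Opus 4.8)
The plan is to prove the four statements by directly comparing the closed-form expressions \eqref{q_A2}, \eqref{q_B2}, and \eqref{q_C2}, which all share the same structural shape. The key observation is that each $q$-function (after the $q_{ass}(\lambda)=\lambda$ baseline) has the form $q(\lambda)=\sqrt{(a^2+\tan^2\psi')(b^2+\tan^2\phi')}-\tan\psi'\tan\phi'$ for suitable $a,b$ depending on $\lambda$: for $q_A$ we take $a=b=\lambda$ (after factoring $\lambda$ inside the root), for $q_B$ we take $a=b=1$ but divide the whole expression by $\lambda$ (equivalently scale the root arguments), and for $q_C$ we take $a=1$, $b=\lambda$. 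So first I would rewrite all three in this unified square-root form so that the comparisons reduce to monotonicity of a single two-parameter function.

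For statement 1, the inequality $q\geq q_{ass}=\lambda$ should follow from a clean algebraic manipulation: writing $q(\lambda)+\tan\psi'\tan\phi'=\sqrt{(a^2+\tan^2\psi')(b^2+\tan^2\phi')}$ and squaring, one checks $(a^2+\tan^2\psi')(b^2+\tan^2\phi')\geq(\lambda+\tan\psi'\tan\phi')^2$, which expands to a statement provable by the AM-GM or Cauchy-Schwarz inequality (the cross term $a^2\tan^2\phi'+b^2\tan^2\psi'\geq 2\lambda\tan\psi'\tan\phi'$ when $ab=\lambda$, which holds in each case). For $q_C$ I would treat the two branches of \eqref{q_C2} separately, noting the lower branch equals $\lambda$ exactly. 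Statement 2 is a direct substitution $\lambda=1$: at $\lambda=1$ all three reduce to $\sqrt{(1+\tan^2\psi')(1+\tan^2\phi')}-\tan\psi'\tan\phi'$, and I would verify the branch condition for $q_C$ holds at $\lambda=1$ since $\tan\phi'\geq\tan\psi'$ follows from $\psi'\leq\phi'$.

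The substantive part is statements 3 and 4, the ordering of $q_A,q_B,q_C$ on either side of $\lambda=1$. Here the strategy is to compare the three functions pairwise by comparing the arguments under their respective square roots (since $t\mapsto\sqrt{t}$ is increasing and the subtracted constant $\tan\psi'\tan\phi'$ is identical across all three). For $q_A$ versus $q_B$, after clearing the $1/\lambda$ factor in $q_B$, this reduces to comparing $\lambda^2\sqrt{(1+\tan^2\psi')(1+\tan^2\phi')}$-type quantities with $\sqrt{(\lambda^2+\tan^2\psi')(\lambda^2+\tan^2\phi')}$, and the sign of the difference of the squared arguments is controlled by a factor proportional to $(\lambda^2-1)$, giving the switch at $\lambda=1$. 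Similarly, $q_B$ versus $q_C$ compares $(\lambda^2+\tan^2\psi')(\lambda^2+\tan^2\phi')$ with $\lambda^2(\lambda^2+\tan^2\phi')(1+\tan^2\psi')$, whose difference again factors through $(\lambda^2-1)$.

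I expect the main obstacle to be bookkeeping rather than conceptual difficulty: keeping the $1/\lambda$ normalization of $q_B$ consistent when squaring, and handling the piecewise definition of $q_C$ so that the branch boundary $\tan\phi'=\lambda\tan\psi'$ is correctly located relative to $\lambda=1$. In particular, for statement 4 (the regime $\lambda\leq 1$) I must confirm which branch of $q_C$ is active; since $\lambda\leq 1$ gives $\lambda\tan\psi'\leq\tan\psi'\leq\tan\phi'$, the upper branch applies throughout, so no case-splitting is needed there, whereas for statement 3 the branch may switch and I would need to verify the ordering persists across the switch (or argue it by continuity together with the monotonicity established for $q_C$). The inequalities themselves, once reduced to comparing squared radicands, should reduce to sign statements about $(\lambda^2-1)$ times a manifestly non-negative factor.
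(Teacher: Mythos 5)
Your overall plan (statements 1 and 2 by direct algebra, statements 3 and 4 by pairwise comparison with a sign switch at $\lambda=1$) points in the right direction, and your treatment of statements 1 and 2 and of the branch structure of $q_C$ is fine. However, the reduction you rely on for statements 3 and 4 rests on a false premise: the three functions do \emph{not} share a common subtracted term $\tan\psi'\tan\phi'$. Writing $a:=\tan\phi'$ and $b:=\tan\psi'$, the subtracted terms in \eqref{q_A2}, \eqref{q_B2}, \eqref{q_C2} are $\lambda ab$ for $q_A$, $ab/\lambda$ for $q_B$ and $ab$ for $q_C$, so comparing only the radicands cannot decide the sign of the difference. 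Worse, in both pairwise comparisons the two contributions pull in opposite directions: for $\lambda\geq1$ one has $\lambda\sqrt{(1+a^2)(1+b^2)}\geq\frac{1}{\lambda}\sqrt{(\lambda^2+a^2)(\lambda^2+b^2)}$ but $-\lambda ab\leq -ab/\lambda$; similarly $\frac{1}{\lambda}\sqrt{(\lambda^2+a^2)(\lambda^2+b^2)}=\sqrt{(\lambda^2+a^2)(1+b^2/\lambda^2)}\leq\sqrt{(\lambda^2+a^2)(1+b^2)}$ while $-ab/\lambda\geq -ab$, so for $q_B$ versus $q_C$ your radicand comparison actually goes the \emph{wrong} way relative to the claimed inequality. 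Your ``unified form'' is also internally inconsistent: taking $a=b=\lambda$ inside the root with a subtracted $\tan\psi'\tan\phi'$ yields $\lambda q_B(\lambda)$, not $q_A(\lambda)$.

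The gap is repairable, but it needs an extra idea. The paper's device is to rationalize each function around the baseline: $\frac{q_A(\lambda)}{\lambda}-1=\frac{(a-b)^2}{\sqrt{(1+a^2)(1+b^2)}+1+ab}$, $\frac{q_B(\lambda)}{\lambda}-1=\frac{(a-b)^2}{\sqrt{(\lambda^2+a^2)(\lambda^2+b^2)}+\lambda^2+ab}$, and $\frac{q_C(\lambda)}{\lambda}-1=\frac{(a-\lambda b)^2}{\lambda\left[\sqrt{(\lambda^2+a^2)(\lambda^2+b^2)}+\lambda^2+ab\right]}$ on the first branch, see \eqref{q_A4}--\eqref{C4}. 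In this form all four statements follow by comparing numerators and denominators separately (using $(a-\lambda b)^2\leq(a-b)^2$ for $1\leq\lambda\leq a/b$ and the reverse for $\lambda\leq1$), each comparison now being monotone in the right direction; the mismatched subtracted terms have been absorbed into the rationalization. If you insist on your square-and-compare strategy, you must carry the differing subtracted terms through and close the estimate with an additional inequality such as $ab\sqrt{(1+a^2)(1+b^2)}\leq\frac{1}{2}(a^2+b^2)+a^2b^2$ (Cauchy--Schwarz/AM--GM applied to $\sqrt{a^2(1+b^2)}\cdot\sqrt{b^2(1+a^2)}$); without one of these two repairs the argument for statements 3 and 4 does not close.
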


\begin{proof}
We use the following simplifying notation: $a:=\tan\phi$ and $b:=\tan\psi$, that is $0\leq b\leq a$. From \eqref{q_A2}, \eqref{q_B2}, and \eqref{q_C2}, one can derive the following equalities:
\begin{equation}
\frac{q_A(\lambda)}{\lambda}=1+\frac{(a-b)^2}{\sqrt{(1+a^2)(1+b^2)}+1+ab},
\label{q_A4}
\end{equation}
\begin{equation}
\frac{q_B(\lambda)}{\lambda}=1+\frac{(a-b)^2}{\sqrt{(\lambda^2+a^2)(\lambda^2+b^2)}+\lambda^2+ab},
\end{equation}
\begin{equation}
\frac{q_C(\lambda)}{\lambda}=\left\{\begin{array}{cc}
1+\frac{(a-\lambda b)^2}{\lambda\left[\sqrt{(\lambda^2+a^2)(\lambda^2+b^2)}+\lambda^2+ab\right]}, & \mbox{if } \lambda \leq a/b \\[2mm]
1, & \mbox{if } \lambda \geq a/b.
\end{array}\right.
\label{C4}
\end{equation}
Hence, it is readily seen that the first two statements hold. The relations between $q_A$ and $q_B$ also hold. To relate $q_B$ and $q_C$, we use the following inequalities:
\begin{equation*}
\left.\begin{array}{cl}
(a-\lambda b)^2\leq (a-b)^2,& \mbox{if } 1\leq\lambda \leq a/b,\\
(a-\lambda b)^2\geq (a-b)^2,& \mbox{if } 1\geq\lambda,
\end{array}
\right.
\end{equation*}
and
\begin{equation*}
\left.\begin{array}{cl}
\lambda\left[\sqrt{(\lambda^2+a^2)(\lambda^2+b^2)}+\lambda^2+ab\right]\geq \sqrt{(\lambda^2+a^2)(\lambda^2+b^2)}+\lambda^2+ab,& \mbox{if } 1\leq\lambda,\\[2mm]
\lambda\left[\sqrt{(\lambda^2+a^2)(\lambda^2+b^2)}+\lambda^2+ab\right]\leq \sqrt{(\lambda^2+a^2)(\lambda^2+b^2)}+\lambda^2+ab,& \mbox{if } 1\geq\lambda.
\end{array}
\right.
\end{equation*}
\end{proof}

From Lemma \ref{lem_q_compare}, one can easily derive the following findings which are in accordance with numerical observations presented in \cite{TSS2015b, OTS18}.
\begin{theorem}
	The following statements hold:
	\begin{enumerate}
		\item $\lambda^*_A\leq \lambda^*_{ass}$, $\lambda^*_B\leq \lambda^*_{ass}$, $\lambda^*_C\leq \lambda^*_{ass}$;
		\item either $\;1\leq\lambda^*_A\leq\lambda^*_B\leq\lambda^*_C\;$ or $\;1\geq\lambda^*_A\geq\lambda^*_B\geq\lambda^*_C$.
		\item If one of the values $\lambda^*_A,\,\lambda^*_B,\,\lambda^*_C$ is equal to one then the same holds for the remaining values.
	\end{enumerate}
\label{theorem_FoS}
\end{theorem}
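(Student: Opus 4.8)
The plan is to separate the safety factor from the reduction function by introducing a single threshold that depends only on the data and the geometry. I would set
$$S := \{\,t \geq 0 : \exists\,\mbf\sigma \text{ statically admissible with } \Phi(t;\mbf\sigma)\leq 0 \text{ in }\Omega\,\},\qquad t^* := \sup S,$$
so that, directly from \eqref{SSR_FoS_N}, $\omega^* = \sup\{\lambda\geq 0 : q(\lambda)\in S\}$ for every admissible $q$. The argument already used in Lemma \ref{lem_well_def} --- that $t\mapsto\Phi(t;\mbf\sigma)$ is non-decreasing since $\sigma_1-\sigma_3\geq 0$ --- shows that $S$ is downward closed, hence an interval $[0,t^*)$ or $[0,t^*]$. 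This yields the monotonicity principle I rely on throughout: if $q_1(\lambda)\geq q_2(\lambda)$, then feasibility of $\lambda$ for $q_1$ forces feasibility for $q_2$, so a pointwise-larger reduction function produces a smaller safety factor. Applying this globally with $q_A,q_B,q_C\geq q_{ass}$ from Lemma \ref{lem_q_compare}(1) immediately gives the first statement, $\lambda^*_A,\lambda^*_B,\lambda^*_C\leq\lambda^*_{ass}$.

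For the second statement the obstacle is that the comparisons between $q_A,q_B,q_C$ reverse at $\lambda=1$ by Lemma \ref{lem_q_compare}(3)--(4), so the monotonicity principle cannot be used on all of $[0,\infty)$. I would resolve this by splitting according to the position of $t^*$ relative to the common value $c_1:=q_A(1)=q_B(1)=q_C(1)$ from Lemma \ref{lem_q_compare}(2). If $t^*\geq c_1$, then since each $q_X$ is continuous, non-decreasing and satisfies $q_X(1)=c_1\leq t^*$, one checks $\lambda^*_X\geq 1$ for $X\in\{A,B,C\}$; all three safety factors then lie in $[1,\infty)$, where $q_A\geq q_B\geq q_C$, and applying the monotonicity principle on this half-line yields $1\leq\lambda^*_A\leq\lambda^*_B\leq\lambda^*_C$. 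Symmetrically, if $t^*\leq c_1$ then each $\lambda^*_X\leq 1$, all three lie in $[0,1]$ where $q_C\geq q_B\geq q_A$, and the same reasoning gives $1\geq\lambda^*_A\geq\lambda^*_B\geq\lambda^*_C$. The localized monotonicity is the same one-line contradiction as before: any $\lambda$ strictly between two of the safety factors lies on the correct side of $1$, so the relevant pointwise inequality between the two $q$'s applies at that $\lambda$.

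For the third statement I would prove the equivalence $\lambda^*_X=1\Leftrightarrow t^*=c_1$ for each index separately and then combine. Each $q_X$ is continuous and strictly increasing in a neighborhood of $\lambda=1$ --- for $q_C$ this uses that the first branch of \eqref{q_C2} is active there because $a/b\geq 1$ when $a:=\tan\phi'\geq b:=\tan\psi'$ --- so $\lambda^*_X=1$ forces $q_X(1)=t^*$, i.e. $t^*=c_1$, while conversely $t^*=c_1=q_X(1)$ together with strict monotonicity near $1$ pins $\lambda^*_X$ to $1$. Hence if any one of $\lambda^*_A,\lambda^*_B,\lambda^*_C$ equals $1$, then $t^*=c_1$, and the converse direction applied to the remaining two indices forces them to equal $1$ as well.

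I expect the main difficulty to be the bookkeeping in the boundary case $t^*=c_1$ and, more essentially, the verification that all three safety factors fall on the same side of $1$ simultaneously, since only then can the reversed inequalities of Lemma \ref{lem_q_compare} be exploited one side at a time. Once the common crossing value $c_1$ and the downward-closed threshold $t^*$ are identified, however, those reversed inequalities organize themselves into the two-sided dichotomy and the remaining steps are routine.
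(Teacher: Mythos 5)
Your proof is correct and follows essentially the same route as the paper's: statement~1 via $q_A,q_B,q_C\geq q_{ass}$ and the monotonicity of $\Phi$ in its first argument, and statement~2 via the common value $q_A(1)=q_B(1)=q_C(1)$ together with the reversed orderings of the $q$'s on either side of $\lambda=1$; your threshold $t^*$ is just an explicit packaging of the feasibility argument the paper runs directly on the constraints. The only substantive difference is statement~3, where the paper merely declares it a consequence of statement~2 while you supply the missing continuity/strict-monotonicity argument showing $\lambda^*_X=1\Leftrightarrow t^*=q_X(1)$ --- a worthwhile clarification, since statement~2 alone does not literally exclude, say, $\lambda^*_A=1<\lambda^*_C$.
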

\begin{proof}
Let $\lambda<\lambda^*_A$. Then the constraints in \eqref{SSR_FoS_N} are satisfied for $\lambda$ and $q:=q_A$. Since $q_A\geq q_{ass}$, we have
$$0\geq\Phi(q_A(\lambda),\mbf\sigma)\geq\Phi(q_{ass}(\lambda),\mbf\sigma).$$
It means that the constraints in \eqref{SSR_FoS_N} are also satisfied for $\lambda$ and $q:=q_{ass}$. This implies $\lambda^*_A\leq \lambda^*_{ass}$. Analogously, one can prove $\lambda^*_B\leq \lambda^*_{ass}$ and $\lambda^*_C\leq \lambda^*_{ass}$, and thus the first statement holds.

Let one of the values $\lambda^*_A$, $\lambda^*_B$, $\lambda^*_C$ be greater than one. Then, using the equalities $q_A(1)=q_B(1)=q_C(1)$, the constraints in \eqref{SSR_FoS_N} are satisfied for $\lambda:=1$ and $q:=q_A$ or $q:=q_B$ or $q:=q_C$. This implies $\lambda^*_A\geq 1$, $\lambda^*_B\geq 1$, $\lambda^*_C\geq 1$. From the inequalities $q_A(\lambda)\geq q_B(\lambda)\geq q_C(\lambda)$ which hold for any $\lambda\geq 1$, we consequently derive $1\leq\lambda^*_A\leq\lambda^*_B\leq\lambda^*_C$ similarly as in the first part of the proof.

Let $\lambda^*_C\leq1$. Then, for any $\lambda<1$, we have $q_C(\lambda)\geq q_B(\lambda)\geq q_A(\lambda)$ implying $\lambda^*_A\geq\lambda^*_B\geq\lambda^*_C$. In addition, the inequality $\lambda^*_A\leq1$ must hold as a consequence of the previous part of the proof. Therefore, the second statement holds.

Third statement is a direct consequence of the second statement.
\end{proof}

Let us note that if $\lambda^*_C\geq\tan\phi'/\tan\psi'$ then $\lambda^*_C=\lambda^*_{ass}$. If $\psi'=0^\circ$ then $\lambda^*_B=\lambda^*_C$. In this case, the Davis B and Davis C approaches coincide. 

It is also important to note that the suggested OPT-SSR problem is not limited to the choice of $q_A$, $q_B$, and $q_C$ of the function $q$. The choice of the function $q$ can also be optimized, for example, by inverse analysis.

\section{Iterative limit analysis for the solution of OPT-SSR}
\label{sec_LA}

The aim of this section is to relate the iterative LA solution scheme from \cite{Sl13} to the OPT-SSR problem. Consider a fixed value of $\lambda\geq0$ and the corresponding reduction parameter $q(\lambda)$ for a given function $q$. With respect to this parameter, we define the LA problem as follows:

\smallskip\noindent
\textit{$\ell(\lambda)=$ supremum of $\zeta\geq0$ subject to}
\begin{equation}
\left.\begin{array}{c}
-\mathrm{div}\, \mbf\sigma=\zeta\mbf F\;\mbox{ in }\Omega,\;\;\mbf\sigma\mbf n=\zeta\mbf f\;\mbox{ on }\partial\Omega_f,\\[2mm]
\Phi(q(\lambda),\mbf\sigma)\leq0\; \mbox{ in }\Omega.
\end{array}\right\}
\label{LA_FoS3}
\end{equation}
The value $\ell(\lambda)$ defines the safety factor of the LA problem depending on $\lambda$.
We shall discuss properties of the corresponding function $\ell$. 
\begin{lemma}
	Let the function $q$ be non-decreasing. Then the function $\ell$ is non-increasing.
\end{lemma}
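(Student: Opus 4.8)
The plan is to exploit the monotonicity of the yield constraint in order to establish an inclusion between the feasible sets of the two load-factor problems, and then to pass to the supremum. I would fix two values $0\leq\lambda_1<\lambda_2$ and aim to prove $\ell(\lambda_1)\geq\ell(\lambda_2)$. The crucial structural observation is that in the LA problem \eqref{LA_FoS3} the static admissibility constraints $-\mathrm{div}\,\mbf\sigma=\zeta\mbf F$ and $\mbf\sigma\mbf n=\zeta\mbf f$ carry no dependence on $\lambda$; the parameter $\lambda$ enters only through the plastic admissibility inequality $\Phi(q(\lambda);\mbf\sigma)\leq0$.

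Next I would recall the monotonicity already used in the proof of Lemma \ref{lem_well_def}: for any fixed $\mbf\sigma$ the coefficient $\sigma_1-\sigma_3\geq0$, so the map $s\mapsto(\sigma_1-\sigma_3)\sqrt{s^2+\tan^2\phi'}+(\sigma_1+\sigma_3)\tan\phi'-2c'$ is non-decreasing in $s\geq0$; equivalently, $\Phi(\,\cdot\,;\mbf\sigma)$ is non-decreasing in its first argument. Combining this with the hypothesis that $q$ is non-decreasing, so that $q(\lambda_1)\leq q(\lambda_2)$, yields $\Phi(q(\lambda_1);\mbf\sigma)\leq\Phi(q(\lambda_2);\mbf\sigma)$ for every admissible $\mbf\sigma$.

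I would then use this to prove the feasible-set inclusion. Let $\zeta\geq0$ be feasible for \eqref{LA_FoS3} at $\lambda=\lambda_2$, witnessed by some stress field $\mbf\sigma$. Since the static constraints are independent of $\lambda$, the pair $(\zeta,\mbf\sigma)$ satisfies them unchanged, and by the previous step $\Phi(q(\lambda_1);\mbf\sigma)\leq\Phi(q(\lambda_2);\mbf\sigma)\leq0$, so the same $\mbf\sigma$ witnesses feasibility of $\zeta$ at $\lambda_1$. Hence every $\zeta$ admissible for $\lambda_2$ is admissible for $\lambda_1$, and taking suprema gives $\ell(\lambda_1)\geq\ell(\lambda_2)$, which is the claim. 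This inclusion automatically covers the case $\ell=+\infty$ (an unbounded feasible set for $\lambda_2$ forces one for $\lambda_1$), and the value $\zeta=0$ is always feasible via $\mbf\sigma=\mbf 0$ because $\Phi(q(\lambda);\mbf 0)=-2c'\leq0$, so each supremum is taken over a nonempty set and the comparison is meaningful.

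I do not expect a genuine obstacle here: the only real content is bookkeeping about which constraints depend on $\lambda$ and which do not, together with correctly orienting the two monotonicities ($q$ increasing in $\lambda$, and $\Phi$ increasing in $q$) so that they compose in the direction that \emph{shrinks} the feasible set as $\lambda$ grows. The single point deserving a line of care is confirming nonemptiness of the feasible set (guaranteed by the always-admissible $\zeta=0$), which makes the comparison of suprema well posed.
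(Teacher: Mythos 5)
Your proof is correct and follows essentially the same route as the paper: both arguments reduce the claim to the feasible-set inclusion obtained by composing the monotonicity of $\Phi(\cdot\,;\mbf\sigma)$ (via $\sigma_1-\sigma_3\geq0$) with the monotonicity of $q$, noting that the static constraints in \eqref{LA_FoS3} do not depend on $\lambda$. The only cosmetic difference is that the paper phrases the comparison of suprema as ``if $\zeta<\ell(\lambda_2)$ then $\zeta\leq\ell(\lambda_1)$,'' while you additionally record the (harmless) remarks on nonemptiness and the infinite case.
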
	
\begin{proof}
	Let $\lambda_1\leq\lambda_2$ be two arbitrary values. To prove $\ell(\lambda_1)\geq\ell(\lambda_2)$ it suffices to show that the following implication holds for any $\zeta\geq 0$: if $\zeta<\ell(\lambda_2)$ then $\zeta\leq\ell(\lambda_1)$. Let us suppose that $\zeta<\ell(\lambda_2)$. Then there exists $\mbf\sigma$ satisfying \eqref{LA_FoS3} for $\lambda:=\lambda_2$. Since the function $q$ is non-decreasing by the assumption, we have
	$$\Phi(q(\lambda_1),\mbf\sigma)\leq\Phi(q(\lambda_2),\mbf\sigma)\leq0.$$
	Hence, $\mbf\sigma$ satisfies \eqref{LA_FoS3} also for $\lambda:=\lambda_1$. Therefore, $\zeta\leq\ell(\lambda_1)$.
\end{proof}
If the function $q$ is continuous, then one can also expect that $\ell$ is continuous. By comparison of the constraints \eqref{SSR_FoS2} and \eqref{LA_FoS3}, we derive that the safety factor $\omega^*$ of the OPT-SSR problem introduced in Section \ref{sec_nonassoc} is a solution of the following equation: 
\begin{equation}
\ell(\omega^*)=1.
\label{l*}
\end{equation}
If we solve this equation iteratively, we arrive, for example, at the algorithm introduced in \cite{Sl13}. Consequently, the safety factors $\lambda^*_A$, $\lambda^*_B$ and $\lambda^*_C$ presented above should be very close to the safety factors computed with finite element limit analysis (FELA) Davis A-C as presented in \cite{TSS2015b}.

However, repetitive solution of the LA problem is expensive, especially, if the LA is combined with mesh adaptivity which improves the quality of the computed results significantly. Therefore, we shall derive in Section \ref{sec_regularization} a more straightforward method for solution of OPT-SSR.

\section{Variational principles, duality and kinematic approaches}
\label{sec_duality}

We consider the abstract OPT-SSR problem introduced in Section \ref{sec_nonassoc}:

\smallskip\noindent
\textit{$\omega^*=$ supremum of $\lambda\geq0$ subject to}
\begin{equation}
\left.\begin{array}{c}
-\mathrm{div}\, \mbf\sigma=\mbf F\;\mbox{ in }\Omega,\;\;\mbf\sigma\mbf n=\mbf f\;\mbox{ on }\partial\Omega_f,\\[2mm]
\Phi\big(q(\lambda);\mbf\sigma\big)\leq0\; \mbox{ in }\Omega,
\end{array}\right\}
\label{SSR_FoS}
\end{equation}
where $q$ is an increasing, continuous and nonnegative function. 
This problem can be interpreted as the static principle of the OPT-SSR method. The aim of this section is to derive the corresponding kinematic principle, which will be used for the numerical solution of the problem. Since a similar derivation is known in the limit analysis problem \cite{T85,Ch96,HRS19}, some technical details are skipped, for the sake of brevity. 

We introduce the following functional spaces: 
\begin{equation}
V=\{\mbf v\in [H^1(\Omega)]^3\ |\;\;\mbf v=\mbf 0\;\mbox{on } \partial\Omega_u\},
\label{V}
\end{equation}
\begin{equation}
\Sigma=\{\mbf \sigma\in [L^2(\Omega)]^{3\times3}\ |\;\;\sigma_{ij}=\sigma_{ji}\;\mbox{in } \Omega\}.
\label{S}
\end{equation}
Similarly as in LA, the space $V$ represents velocity fields and $\Sigma$ is used for symmetric stress fields. $L^2(\Omega)$ and $H^1(\Omega)$ denotes the Lebesgue and Sobolev spaces, respectively. More advanced functional spaces are considered in \cite{Ch96}.

Using the space $V$ we arrive at the weak form of \eqref{SSR_FoS}$_1$:
\begin{equation}
\int_\Omega\mbf\sigma:\mbf\varepsilon(\mbf v)\,dx=L(\mbf v)\quad\forall \mbf v\in V,
\label{balance}
\end{equation}
where $\mbf\varepsilon$ denotes the strain-rate tensor field,
\begin{equation}
\mbf\varepsilon (\mbf v) = \mbox{$\frac{1}{2}$} (\nabla\mbf v+(\nabla\mbf v)^\top),
\label{strain}
\end{equation}
and $L$ is the load functional defined by 
\begin{equation}
L(\mbf v)=\int_\Omega \mbf F\cdot\mbf v\,dx+\int_{\partial\Omega_f}\mbf f\cdot\mbf v\,ds.
\label{L}
\end{equation}
Let $\Lambda$ denote the set of stresses $\mbf\sigma\in\Sigma$ satisfying \eqref{balance} and let
\begin{equation}
P_{q(\lambda)}:=\{\mbf\sigma\in\Sigma\ |\;\; \Phi\big(q(\lambda);\mbf\sigma\big)\leq0\;\mbox{ in }\Omega\}.
\label{P}
\end{equation}
We see that the set $P_{q(\lambda)}$ represents the constraint \eqref{SSR_FoS}$_2$ and thus we can write
\begin{align}
\omega^*&=\sup\{\lambda\geq0\ |\;\; P_{q(\lambda)}\cap\Lambda\neq\emptyset\}\nonumber\\
&= \sup_{\lambda\geq0}\,\sup_{\sbs\in P_{q(\lambda)}\cap\Lambda}\{\lambda\}.
\label{lambda*}
\end{align}
From \eqref{balance}, we have
\begin{equation*}
\inf_{\sbv\in V}\left[\int_\Omega\mbf\sigma:\mbf\varepsilon(\mbf v)\,dx-L(\mbf v)\right]=\left\{
\begin{array}{cc}
0,& \mbox{if } \mbf\sigma\in \Lambda,\\
-\infty,& \mbox{otherwise}.
\end{array}
\right.
\end{equation*}
Hence, one can rewrite \eqref{lambda*} as follows:
\begin{align}
\omega^*&=\sup_{\lambda\geq0}\,\sup_{\sbs\in P_{q(\lambda)}}\,\inf_{\sbv\in V}\left[\lambda+\int_\Omega\mbf\sigma:\mbf\varepsilon(\mbf v)\,dx-L(\mbf v)\right]\nonumber\\[2mm]
&=\sup_{\lambda\geq0}\,\inf_{\sbv\in V}\,\sup_{\sbs\in P_{q(\lambda)}}\left[\lambda+\int_\Omega\mbf\sigma:\mbf\varepsilon(\mbf v)\,dx-L(\mbf v)\right]\nonumber\\[2mm]
&=\sup_{\lambda\geq0}\,\inf_{\sbv\in V}\left[\lambda+\int_\Omega D(q(\lambda);\mbf\varepsilon(\mbf v))\,dx-L(\mbf v)\right],
\label{lambda*2}
\end{align}
where 
\begin{equation}
D(q(\lambda);\mbf \varepsilon)=\sup_{\substack{\sbs\in\mathbb R^{3\times 3}_{sym}\\ \Phi(q(\lambda);\sbs)\leq0}}\mbf\sigma:\mbf\varepsilon
\label{dissipation}
\end{equation}
denotes the local dissipation function depending on $\lambda\geq0$. The function $D(q(\lambda);\cdot)$ is finite-valued only on a convex cone belonging to $\mathbb R^{3\times 3}_{sym}$. Therefore, the inner problem in \eqref{lambda*2} can be classified as cone programming. \eqref{lambda*2} can be interpreted as the \textit{kinematic principle} of the OPT-SSR method.

Let us note that the ordering of inf and sup has been interchanged during the derivation of \eqref{lambda*2}. The corresponding equality is expected and partially justified by the results presented in \cite{Ch96,HRS19}.  

\section{Regularization method}
\label{sec_regularization}

In \cite{SHHC15,CHKS15,HRS15} and\\ \cite{HRS16,RSH18, HRS19}, a regularization method has been systematically developed for the solution of the limit analysis (LA) problem. This methods has also been used in strain-gradient plasticity \cite{RS20}. The aim of this section is to use the regularization for the solution of the OPT-SSR problem and to study the relation between the original and the regularized problem.

We arise from \eqref{lambda*} and regularize this problem with respect to a parameter $\alpha>0$ as follows:
\begin{equation}
\omega^*_\alpha=\sup_{\lambda\geq0}\,\sup_{\sbs\in P_{q(\lambda)}\cap\Lambda}\left[\lambda-\frac{1}{2\alpha}\int_\Omega\mathbb C^{-1}\mbf\sigma:\mbf\sigma\,dx\right],\label{lambda_alpha1}
\end{equation}
where $\mathbb C$ is a positive definite fourth order tensor, for example, the elastic tensor. We have the following result.
\begin{lemma}
The sequence $\{\omega^*_\alpha\}_{\alpha>0}$ defined by \eqref{lambda_alpha1} is nondecreasing and satisfying
\begin{equation}
\omega^*_\alpha\leq\omega^*,\quad \lim_{\alpha\rightarrow+\infty}\omega^*_\alpha=\omega^*,
\label{omega_alpha}
\end{equation}
where $\omega^*$ is defined by \eqref{lambda*}.
\label{lem_alpha_1}
\end{lemma}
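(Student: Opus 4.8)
The plan is to dispatch the monotonicity and the upper bound directly from the nonnegativity of the penalty term, and then to obtain the limit by an $\varepsilon$-near-optimality argument. The key structural feature I would exploit throughout is that the feasible set of pairs $(\lambda,\mbf\sigma)$ with $\lambda\geq0$ and $\mbf\sigma\in P_{q(\lambda)}\cap\Lambda$ does \emph{not} depend on $\alpha$; only the objective does. Since $\mathbb C$ is positive definite, the penalty $\frac{1}{2\alpha}\int_\Omega\mathbb C^{-1}\mbf\sigma:\mbf\sigma\,dx$ is nonnegative for every $\mbf\sigma\in\Sigma$. For a fixed feasible pair, the map $\alpha\mapsto\lambda-\frac{1}{2\alpha}\int_\Omega\mathbb C^{-1}\mbf\sigma:\mbf\sigma\,dx$ is nondecreasing because $1/(2\alpha)$ decreases as $\alpha$ grows; taking the supremum over the ($\alpha$-independent) feasible set preserves this, so $\{\omega^*_\alpha\}_{\alpha>0}$ is nondecreasing. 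For the upper bound, discarding the nonnegative penalty yields $\lambda-\frac{1}{2\alpha}\int_\Omega\mathbb C^{-1}\mbf\sigma:\mbf\sigma\,dx\leq\lambda$ on every feasible pair, and passing to the supremum gives $\omega^*_\alpha\leq\omega^*$ by comparison with \eqref{lambda*}.

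For the limit, the monotone bounded sequence has a limit $\ell\leq\omega^*$, so it remains to prove $\ell\geq\omega^*$. Fixing $\varepsilon>0$, the definition of $\omega^*$ as a supremum furnishes $\overline\lambda>\omega^*-\varepsilon$ and some $\overline{\mbf\sigma}\in P_{q(\overline\lambda)}\cap\Lambda$ (if $\omega^*=+\infty$ one replaces $\omega^*-\varepsilon$ by an arbitrary threshold and argues identically). Using this single fixed pair as a competitor in \eqref{lambda_alpha1} gives $\omega^*_\alpha\geq\overline\lambda-\frac{1}{2\alpha}\int_\Omega\mathbb C^{-1}\overline{\mbf\sigma}:\overline{\mbf\sigma}\,dx$ for all $\alpha$, and letting $\alpha\to+\infty$ yields $\ell\geq\overline\lambda>\omega^*-\varepsilon$. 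Since $\varepsilon$ is arbitrary, $\ell\geq\omega^*$, which together with $\ell\leq\omega^*$ closes the argument.

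The main---though mild---obstacle is ensuring that $\int_\Omega\mathbb C^{-1}\overline{\mbf\sigma}:\overline{\mbf\sigma}\,dx$ is a finite constant, so that the penalty genuinely vanishes along $\alpha\to+\infty$; this holds precisely because $\overline{\mbf\sigma}\in\Sigma=[L^2(\Omega)]^{3\times3}$ and $\mathbb C^{-1}$ is bounded, and it is exactly why the regularization is built on an $L^2$-type quadratic functional. The second point to keep in mind is that $\omega^*$ is a supremum that need not be attained, which forces the argument to proceed through $\varepsilon$-near-optimal feasible pairs rather than an optimizer; fortunately the quadratic term vanishes for each such fixed pair, which is all that is required. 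Note finally that no interchange of infimum and supremum enters this lemma, so the delicate min--max equality flagged after \eqref{lambda*2} plays no role here.
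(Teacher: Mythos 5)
Your proof is correct and follows essentially the same route as the paper's: monotonicity and the bound $\omega^*_\alpha\leq\omega^*$ come from the nonnegativity of the penalty over the $\alpha$-independent feasible set, and the limit is obtained by testing with a fixed feasible stress at a near-optimal $\lambda$ and letting the quadratic term vanish as $\alpha\to+\infty$. The only cosmetic difference is that the paper takes as competitor the (unique) minimizer of the quadratic functional over $P_{q(\lambda)}\cap\Lambda$, whereas you take an arbitrary element of that nonempty set, which works equally well.
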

\begin{proof}
It is readily seen that the inequalities $\omega^*_{\alpha_1}\leq\omega^*_{\alpha_2}\leq\omega^*$ hold for any $0 <\alpha_1\leq\alpha_2$. Next, for any $\lambda<\omega^*$ the intersection $P_{q(\lambda)}\cap\Lambda$ is nonempty as follows from \eqref{lambda*}. Then the inner sup-problem in \eqref{lambda_alpha1} has a unique solution  $\mbf\sigma_\lambda\in P_{q(\lambda)}\cap\Lambda$, because it contains the quadratic functional. Consequently, 
$$\lim_{\alpha\rightarrow+\infty}\omega^*_\alpha\geq \lim_{\alpha\rightarrow+\infty}\left[\lambda-\frac{1}{2\alpha}\int_\Omega\mathbb C^{-1}\mbf\sigma_\lambda:\mbf\sigma_\lambda\,dx\right]=\lambda$$
and thus
$$\omega^*\geq\lim_{\alpha\rightarrow+\infty}\omega^*_\alpha\geq \sup\{\lambda\geq0\ |\;\;P_{q(\lambda)}\cap\Lambda\neq\emptyset\}=\omega^*.$$
This implies the limit in \eqref{omega_alpha}.
\end{proof}
One can also write
\begin{equation}
\omega^*_\alpha=\max_{\lambda\geq0}\left[\lambda-G_\alpha(\lambda)\right]=\lambda^*_\alpha-G_\alpha(\lambda^*_\alpha),\label{omega_alpha1}
\end{equation}
where
\begin{align}
G_\alpha(\lambda)&=\inf_{\sbs\in P_{q(\lambda)}\cap\Lambda}\ \frac{1}{2\alpha}\int_\Omega\mathbb C^{-1}\mbf\sigma:\mbf\sigma\,dx\label{G}\\[3mm]
&=\left\{
\begin{array}{cc}
\frac{1}{2\alpha}\int_\Omega\mathbb C^{-1}\mbf\sigma_\lambda:\mbf\sigma_\lambda\,dx, & \mbox{if } P_{q(\lambda)}\cap\Lambda\neq\emptyset,\\
+\infty,& \mbox{otherwise},
\end{array}
\right.\label{G2}
\end{align}
and $\lambda^*_\alpha$ maximizes the middle term in \eqref{omega_alpha1}. Since the value $G_\alpha(\lambda^*_\alpha)$ is finite, we have $P_{q(\lambda^*_\alpha)}\cap\Lambda\neq\emptyset$. This fact, \eqref{lambda*} and \eqref{omega_alpha1} imply the following result.
\begin{lemma}
The sequence $\{\lambda^*_\alpha\}_{\alpha>0}$ defined by \eqref{G} satisfies
\begin{equation}
\omega^*_\alpha\leq\lambda^*_\alpha\leq\omega^*,\quad \lim_{\alpha\rightarrow+\infty}\lambda^*_\alpha=\omega^*,
\label{lambda_alpha}
\end{equation}
where $\omega^*$ and $\omega^*_\alpha$ are defined by \eqref{lambda*} and \eqref{lambda_alpha1}, respectively.
\label{lem_alpha_2}
\end{lemma}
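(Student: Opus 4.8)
The plan is to establish the two-sided bound $\omega^*_\alpha\leq\lambda^*_\alpha\leq\omega^*$ first and then obtain the limit by a squeeze argument built on Lemma \ref{lem_alpha_1}. All three assertions follow quickly from the definitions in \eqref{omega_alpha1}--\eqref{G2}, so I do not anticipate a genuinely hard step; the proof is essentially a corollary of the preceding lemma.

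For the left inequality, I would first observe that $G_\alpha(\lambda)\geq0$ for every $\lambda\geq0$: the integrand $\mathbb C^{-1}\mbf\sigma:\mbf\sigma$ is nonnegative because $\mathbb C$ (hence $\mathbb C^{-1}$) is positive definite, so $G_\alpha$ is either an infimum of nonnegative quantities or $+\infty$. Combining $G_\alpha(\lambda^*_\alpha)\geq0$ with the identity $\omega^*_\alpha=\lambda^*_\alpha-G_\alpha(\lambda^*_\alpha)$ from \eqref{omega_alpha1} gives $\omega^*_\alpha\leq\lambda^*_\alpha$ at once.

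For the right inequality I would invoke the remark already made just before the statement: since $\lambda^*_\alpha$ maximizes $\lambda-G_\alpha(\lambda)$ and the maximal value $\omega^*_\alpha$ is finite, the term $G_\alpha(\lambda^*_\alpha)$ must be finite. By the case distinction in \eqref{G2}, finiteness of $G_\alpha(\lambda^*_\alpha)$ forces $P_{q(\lambda^*_\alpha)}\cap\Lambda\neq\emptyset$. But $\omega^*$ in \eqref{lambda*} is by definition the supremum of all $\lambda\geq0$ for which this intersection is nonempty, so $\lambda^*_\alpha\leq\omega^*$. This yields the full chain $\omega^*_\alpha\leq\lambda^*_\alpha\leq\omega^*$ valid for every $\alpha>0$.

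Finally, for the limit I would pass to $\alpha\to+\infty$ in this sandwich. Lemma \ref{lem_alpha_1} supplies $\lim_{\alpha\to+\infty}\omega^*_\alpha=\omega^*$, and the upper bound is the constant $\omega^*$, so the squeeze theorem immediately gives $\lim_{\alpha\to+\infty}\lambda^*_\alpha=\omega^*$. The only point demanding any care is the finiteness of $G_\alpha(\lambda^*_\alpha)$, which is precisely what converts optimality of $\lambda^*_\alpha$ into feasibility of the constraint set; since this has already been argued in the paragraph introducing $\lambda^*_\alpha$, I would simply cite it rather than reprove it.
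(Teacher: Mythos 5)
Your proposal is correct and follows essentially the same route as the paper: the paper likewise deduces $\omega^*_\alpha\leq\lambda^*_\alpha$ from $G_\alpha\geq0$ in the identity $\omega^*_\alpha=\lambda^*_\alpha-G_\alpha(\lambda^*_\alpha)$, obtains $\lambda^*_\alpha\leq\omega^*$ from the finiteness of $G_\alpha(\lambda^*_\alpha)$ forcing $P_{q(\lambda^*_\alpha)}\cap\Lambda\neq\emptyset$, and gets the limit by squeezing with Lemma \ref{lem_alpha_1}. No discrepancies to report.
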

From Lemmas \ref{lem_alpha_1} and \ref{lem_alpha_2}, it follows that the values $\omega^*_\alpha$ and $\lambda^*_\alpha$ are close to $\omega^*$ for sufficiently large $\alpha$. The inequality $\omega^*_\alpha\leq\lambda^*_\alpha\leq\omega^*$  implies that $\lambda_\alpha^*$ is more accurate approximation of $\omega^*$ than $\omega^*_\alpha$. This is illustrated on a numerical example in Section \ref{sec_examples}.

Let us note that the scalar optimization problem in \eqref{omega_alpha1} can be solved, for example, by sequential enlarging $\lambda$ for fixed $\alpha$. A sufficiently large value of $\alpha$ (ensuring that $\lambda_\alpha^*$ is close to $\omega^*$) can be found by a continuation method starting from smaller values of $\alpha$. It suffices to apply the continuation only on a coarse finite element mesh and then a fixed value of $\alpha$ can be used for finer meshes.

Next, for the solution of \eqref{omega_alpha1}, it is crucial to evaluate the function $G_\alpha$. To this end, we use a similar duality approach as presented in Section \ref{sec_duality}. We arrive at the following kinematic definition of $G_\alpha$:
\begin{equation}
G_\alpha(\lambda)=-\inf_{\sbv\in V}\left[\int_\Omega D_\alpha({q(\lambda)};\mbf\varepsilon(\mbf v))\,dx-L(\mbf v)\right],\label{G_alpha}
\end{equation}
where 
\begin{equation}
D_\alpha({q(\lambda)};\mbf \varepsilon)=\sup_{\substack{\sbs\in\mathbb R^{3\times 3}_{sym}\\ \Phi({q(\lambda)};\sbs)\leq0}}\left[\mbf\sigma:\mbf\varepsilon-\frac{1}{2\alpha}\mathbb C^{-1}\mbf\sigma:\mbf\sigma\right]
\label{dissip_alpha}
\end{equation}
is the regularized dissipative function. In particular, $D_\alpha$ is finite-valued and differentiable with respect to $\mbf\varepsilon$ unlike the original dissipation $D$, see, for example, \cite{S14}. Moreover, the second derivative of $D_\alpha$ exists almost everywhere. Let $T_\alpha(q(\lambda);\mbf\varepsilon)\in\mathbb R^{3\times3}_{sym}$ denote the derivative of $D_\alpha(q(\lambda);\mbf\varepsilon)$ with respect to $\mbf\varepsilon$. Then the problem \eqref{G_alpha} is equivalent to the following nonlinear variational equation:
\begin{equation}
\mbox{find } \mbf v_{q(\lambda)}\in V:\quad \int_\Omega T_\alpha(q(\lambda);\mbf\varepsilon(\mbf v_{q(\lambda)})):\mbf\varepsilon(\mbf v)\,dx=L(\mbf v)\quad\forall \mbf v\in V.
\label{v_lambda1}
\end{equation}
It is convenient to solve it by a non-smooth and damped version of the Newton method suggested in \cite{S12}, because this method also finds descent directions of the functional in \eqref{G_alpha}, which do not need to be bounded from below for some $\lambda$. 

Let $D_1:=D_\alpha$ and $T_1:=T_\alpha$ for $\alpha=1$. Then the following formulas hold for any $\alpha>0$, $\lambda\geq0$ and $\mbf\varepsilon\in\mathbb R^{3\times3}_{sym}$:
\begin{equation}
D_\alpha(q(\lambda);\mbf\varepsilon)=\frac{1}{\alpha}D_1(q(\lambda);\alpha\mbf\varepsilon),\quad T_\alpha(q(\lambda);\mbf\varepsilon)=T_1(q(\lambda);\alpha\mbf\varepsilon).
\end{equation}
These formulas simplify the construction of the operators $D_\alpha$ and $T_\alpha$ if the continuation over $\alpha$ is used. In addition, $T_1$ is practically the same as the operator, which arises from the implicit Euler discretization of the elasto-plastic initial-value constitutive problem. Its construction can be found e.g. in \cite{NPO08,SCL17}. The closed form of $D_1$ is presented in the Appendix of this paper.

\section{Numerical examples}
\label{sec_examples}

In this section, we present two different numerical examples on slope stability problems. The first example considers a  homogeneous slope presented in \cite{TSS2015b}. The aim is to illustrate our theoretical results and to verify that the computed FoS are in accordance with the published ones. The second example arises from an analysis of a real slope. Therefore, heterogeneous material conditions and the influence of the pore water pressure are considered in the analysis of this boundary value problem. 

\subsection{Softwares and their numerical solution}

We use and compare the results from three different softwares: in-house codes in Matlab, Plaxis and Comsol Multiphysics.

The in-house Matlab codes are based on elastic-plastic solvers, the finite element method and on mesh adaptivity. They have been systematically developed and described in \cite{SCKKZB16,SCL17,CSV19}. Some of the codes are available for download \cite{CSV18}. Within these codes, we have implemented the regularization method discussed in Section \ref{sec_regularization} to compute safety factors $\lambda^*_{ass}$, $\lambda^*_{A}$, $\lambda^*_{B}$, and $\lambda^*_{C}$ for associated plasticity and for Davis A-C approaches. In particular, six-noded triangular elements with the 7-point Gauss quadrature have been used and combined with the mesh adaptivity introduced in \cite{HRS19,SBKSSP19}. However, 15-noded triangular elements are also implemented in the code.

The software Plaxis enables to solve the shear strength reduction method for both associated and non-associated plasticity. The standard solver is based on the implicit Euler time discretization and the arc-length method \cite{BSE11}. 15-noded triangular elements with a shape function of fourth order are used for the following studies. One can easily implement the Davis A approach in the existing SSR procedure. For the application of the Davis B-C approaches, an iterative procedure is used. Due to the utilization of this software we are able to compare the suggested OPT-SSR method with current approaches of the shear strength reduction (SSR) method.

The software Comsol Multiphysics with its Geomechanical module neither includes the shear strength reduction method nor the arc-length method. However, the code allows to add a global equation to the elastic-plastic system of equations with respect to an unknown parameter and enables an optimization of this parameter. Therefore, the SSR method for associated plasticity and the Davis A-C modifications can be implemented in Comsol Multiphysics. Besides the regularization method, the standard incremental procedure for the solution of the elastic-plastic problem has been used. 15-noded triangular elements are considered.

\subsection{Homogeneous slope} 

Following \cite{TSS2015b}, we consider a homogeneous slope depicted in Figure \ref{fig_scheme}. Its inclination is  $45^\circ$ and sizes (in meters) are given in Figure \ref{fig_scheme}. The effective friction angle $\phi'$ is 45$^\circ$, the effective cohesion $c'$ is 6.0 kPa and the unit weight $\gamma$ is 20.0 kN/m$^3$. The dilatancy angle is either $\psi'=0^\circ$, $\psi'=15^\circ$ or $\psi'=45^\circ$. The chosen values of $\phi'$ and $\psi'$ enable to highlight differences between the associated and non-associated material behavior and between the suggested and the currently used approaches of the SSR method.

Next, we set the following values for the Young modulus and the Poisson ratio: $E=40$ MPa and $\nu=0.3$.
For the regularization method, the value $\alpha=1000$ is used. This value is sufficiently large as will be discussed later.

\begin{figure}[htb]	
\begin{center}
\begin{picture}(220,110)

{\linethickness{1pt}		
	\put(10,10){\line(1,0){200}}
	\put(10,10){\line(0,1){100}}
	\put(210,10){\line(0,1){50}}	
	\put(10,110){\line(1,0){75}}
	\put(135,60){\line(1,0){75}}
	\put(135,60){\line(-1,1){50}}	
}		
\put(8,60){\line(1,0){4}}
\put(85,8){\line(0,1){4}}
\put(135,8){\line(0,1){4}}
\multiput(80,60)(6,0){4}{\vector(0,-1){15}}
\put(88,62){\makebox(0,0)[b]{$\mbf F=-\gamma\mbf e_2$}}

\multiput(13,6)(14,0){15}{\makebox(0,0)[t]{$\triangle$}}
\multiput(6,14)(0,13){8}{\circle{5}}
\multiput(2,11)(0,13){8}{\line(0,1){6}}
\multiput(214,14)(0,13){4}{\circle{5}}
\multiput(218,11)(0,13){4}{\line(0,1){6}}

\put(46,12){\makebox(0,0)[b]{{\scriptsize 15}}}
\put(110,12){\makebox(0,0)[b]{{\scriptsize 10}}}
\put(171,12){\makebox(0,0)[b]{{\scriptsize 15}}}
\put(12,35){\makebox(0,0)[l]{{\scriptsize 10}}}
\put(12,85){\makebox(0,0)[l]{{\scriptsize 10}}}
\put(208,35){\makebox(0,0)[r]{{\scriptsize 10}}}
\end{picture} 
\end{center}
\caption{Geometry of the considered slope with an inclination of $45^\circ$.}
\label{fig_scheme}
\end{figure}
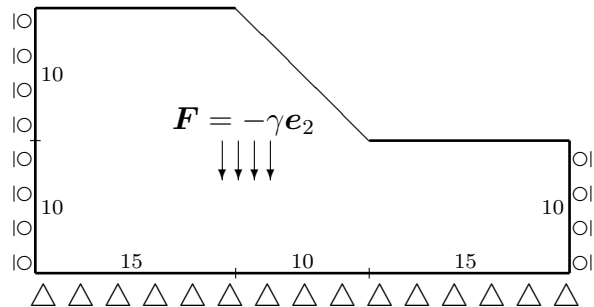

Table \ref{table_FoS1} summarizes the computed factors of safety obtained with different softwares and different approaches. These values are practically the same as discussed in\\ \cite{TSS2015b}. The computed FoS are in accordance with Theorem \ref{theorem_FoS}, that is $1\leq\lambda^*_A\leq \lambda^*_B\leq \lambda^*_C\leq\lambda^*_{assoc}$. We have used the current (standard) approach of the non-associated SSR method only in Plaxis. The corresponding value of FoS for  $\psi=0^\circ$ cannot be uniquely determined due to oscillations of the method, which are a consequence of a varying failure mechanism during the strength reduction procedure (see \cite{TSS2015b} for more details). In \cite{TSS2015b}, the current approach was also investigated for a very fine mesh leading to slightly lower factors of safety, namely 1.42 for $\psi'=15^\circ$ and 1.21–1.27 for $\psi'=0^\circ$. These results indicate again the distinct mesh dependency of FoS in the case of non-associated plasticity.

\begin{table}[htb]
	\caption{Safety factors for the homogeneous slope and different approaches}
	\label{table_FoS1}
	\begin{center}	 
		\begin{tabular}{|l||c|c|c| }
		\hline
		& MATLAB & COMSOL & Plaxis   \\
		\hline\hline
		$\psi=\phi=45^\circ$, assoc. model &  1.52 & 1.52 & 1.51\\
		\hline
		$\psi=15^\circ$, Davis A &  1.27 & 1.28 & 1.27\\
		\hline
		$\psi=15^\circ$, Davis B &  1.36 & 1.37 & 1.35\\
		\hline
		$\psi=15^\circ$, Davis C &  1.41 & 1.42 & 1.41\\
		\hline
		$\psi=15^\circ$, current approach  &   -- & -- & 1.46 \\
		\hline
		$\psi=0^\circ$, Davis A &  1.08 & 1.08 & 1.08\\
		\hline
		$\psi=0^\circ$, Davis B, C &  1.15 & 1.16 & 1.16 \\
		\hline
		$\psi=0^\circ$, current approach  &   -- & -- & 1.27--1.35 \\
		\hline
		\end{tabular}
	\end{center}
\end{table}

Unlike the current SSR approach, the results of OPT-SSR method are practically insensitive if sufficiently fine meshes are used. It is illustrated in Figure \ref{fig_adaptivity1} where the mesh adaptivity within the in-house Matlab codes is used. In this study, 20 levels of meshes are considered and the corresponding safety factors remain from about 1000 elements (level 10) onwards almost constant. The dependence of FoS on the mesh adaptivity has also been analyzed in \cite{OTS18}. The finest mesh and the corresponding slip surface for the Davis B approach are depicted in Figure \ref{fig_finest_mesh1}. The failure surface is visualized using the rate of the deviatoric strain.

\begin{figure}	
	\begin{center}
\includegraphics[width=0.6\textwidth]{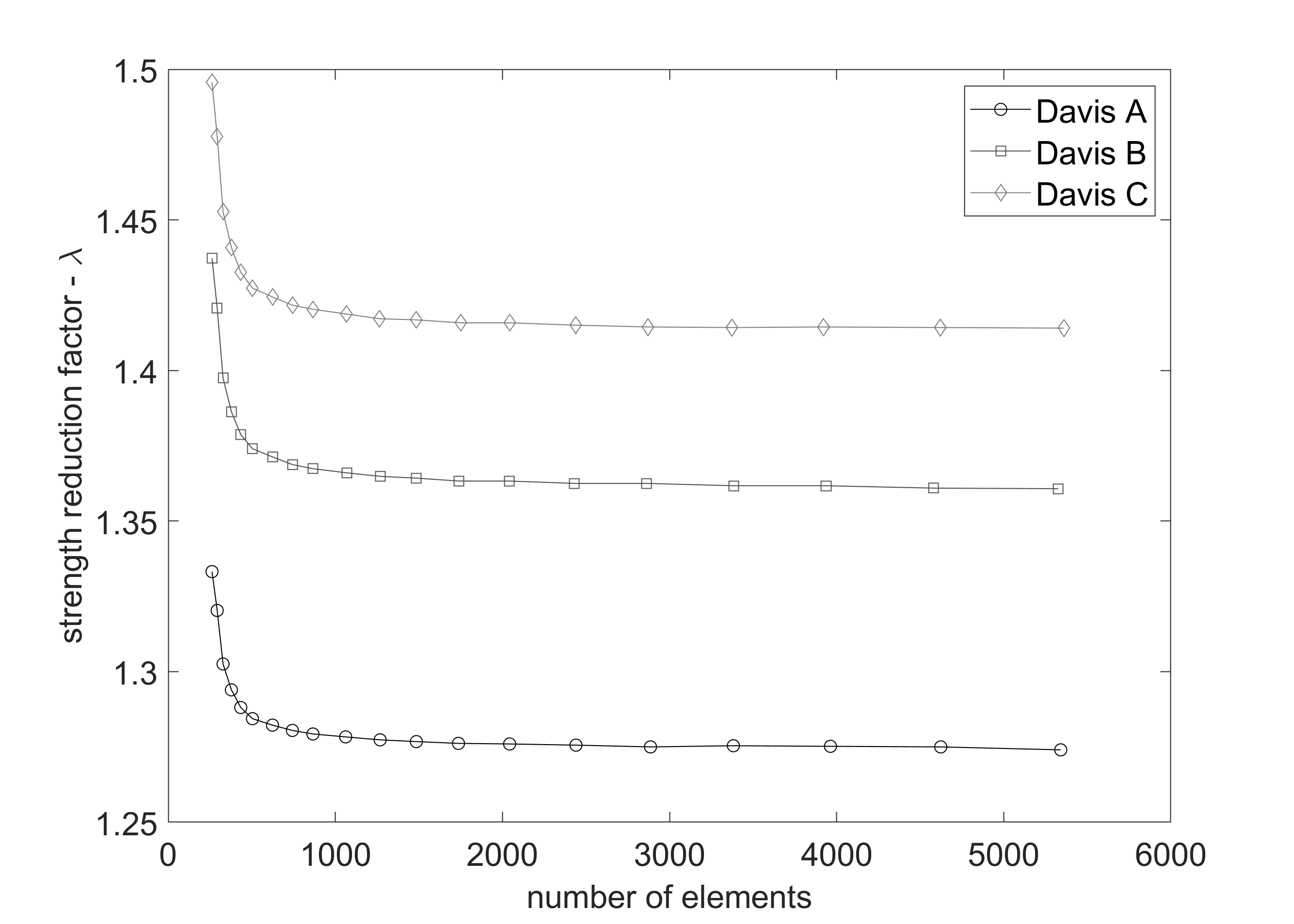}
	\end{center}
	\caption{Safety factors for the homogeneous slope depending on mesh adaptivity ($\psi=15^\circ$).}
	\label{fig_adaptivity1}
\end{figure}

\begin{figure}[h]	
	\begin{center}
		\includegraphics[width=0.45\textwidth]{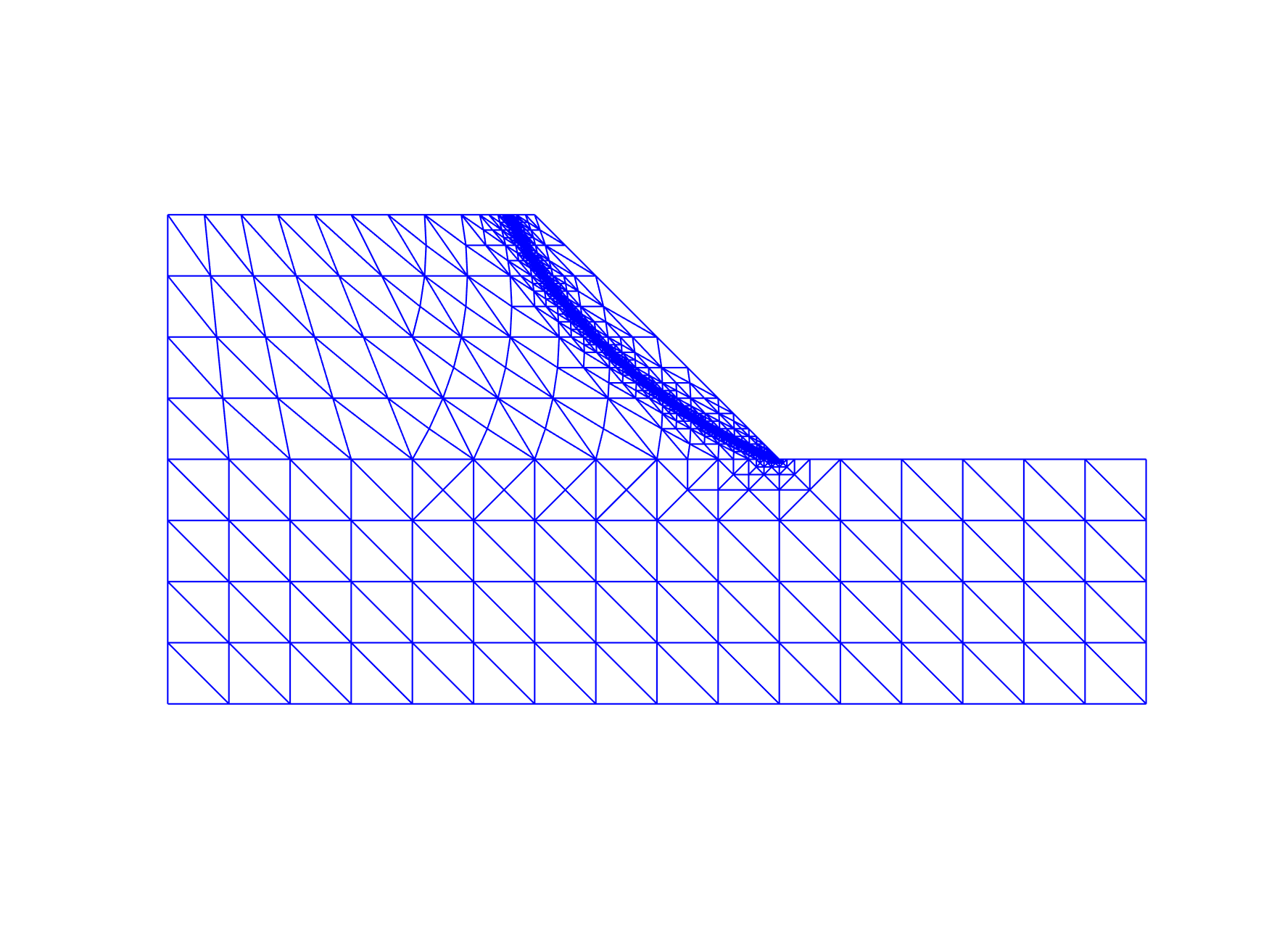}
		\includegraphics[width=0.45\textwidth]{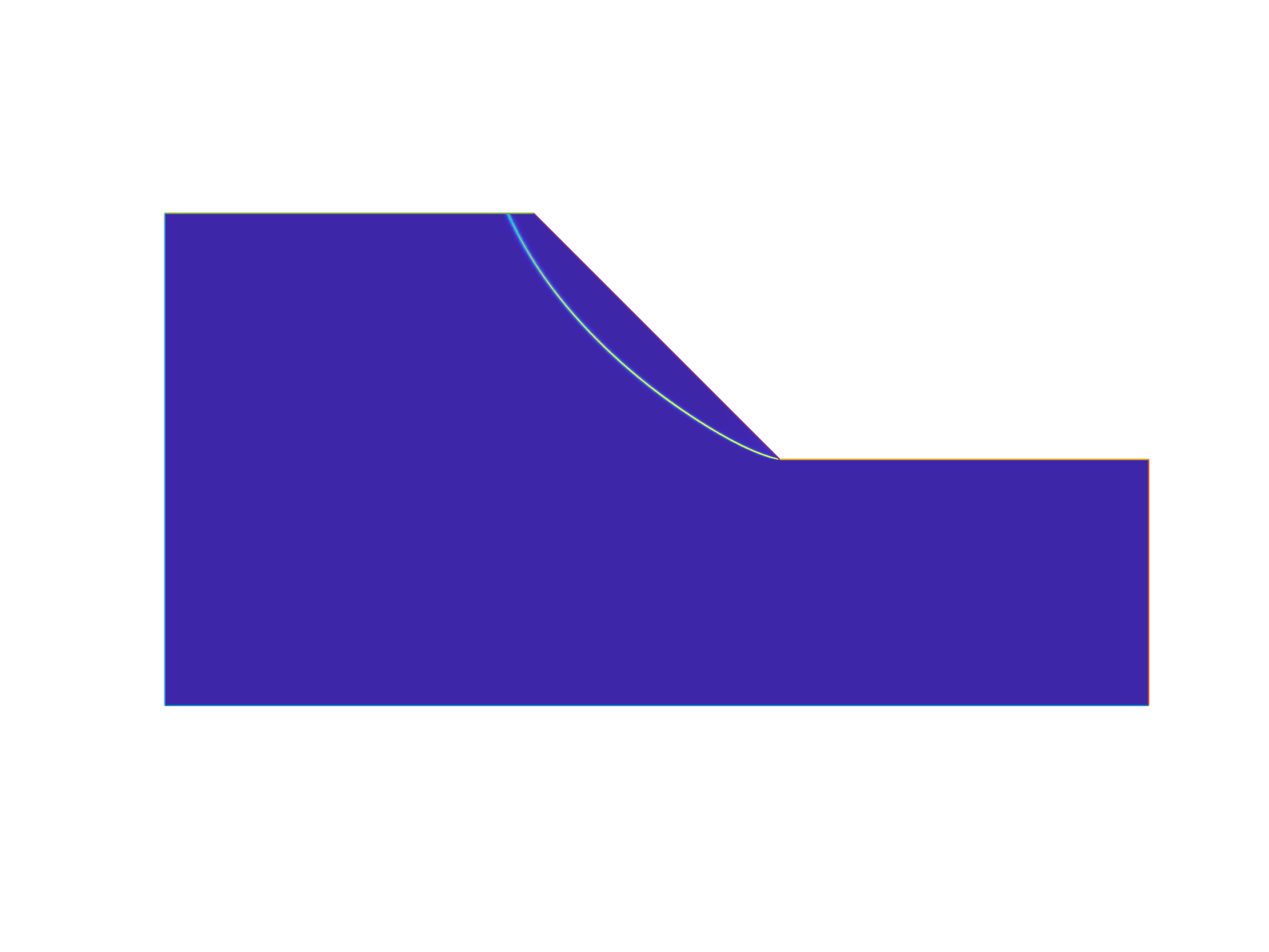}
	\end{center}

\vspace*{-15mm}
	\caption{The finest mesh and the corresponding failure for the homogeneous slope (Matlab code).}
	\label{fig_finest_mesh1}
\end{figure}

The dependency of $\lambda^*_\alpha$ and $\omega^*_\alpha$ (see Section \ref{sec_regularization}) on the regularization parameter $\alpha$ is depicted in Figure \ref{fig_alpha}. One can see that these curves are increasing and approaches $\omega^*$. This confirms the theoretical results of the regularization method. We also see that $\lambda^*_\alpha$ approximates the safety factor $\omega^*$ even for relatively small values of $\alpha$. On the other hand, the bound $\omega^*_\alpha$ is very poor (too far from $\omega^*$) for small values of $\alpha$, see Figure \ref{fig_alpha}(b). Hence, it is important to use sufficiently large values of the regularization parameter $\alpha$.
\begin{figure}[h]	
	\begin{center}
		\includegraphics[width=0.45\textwidth]{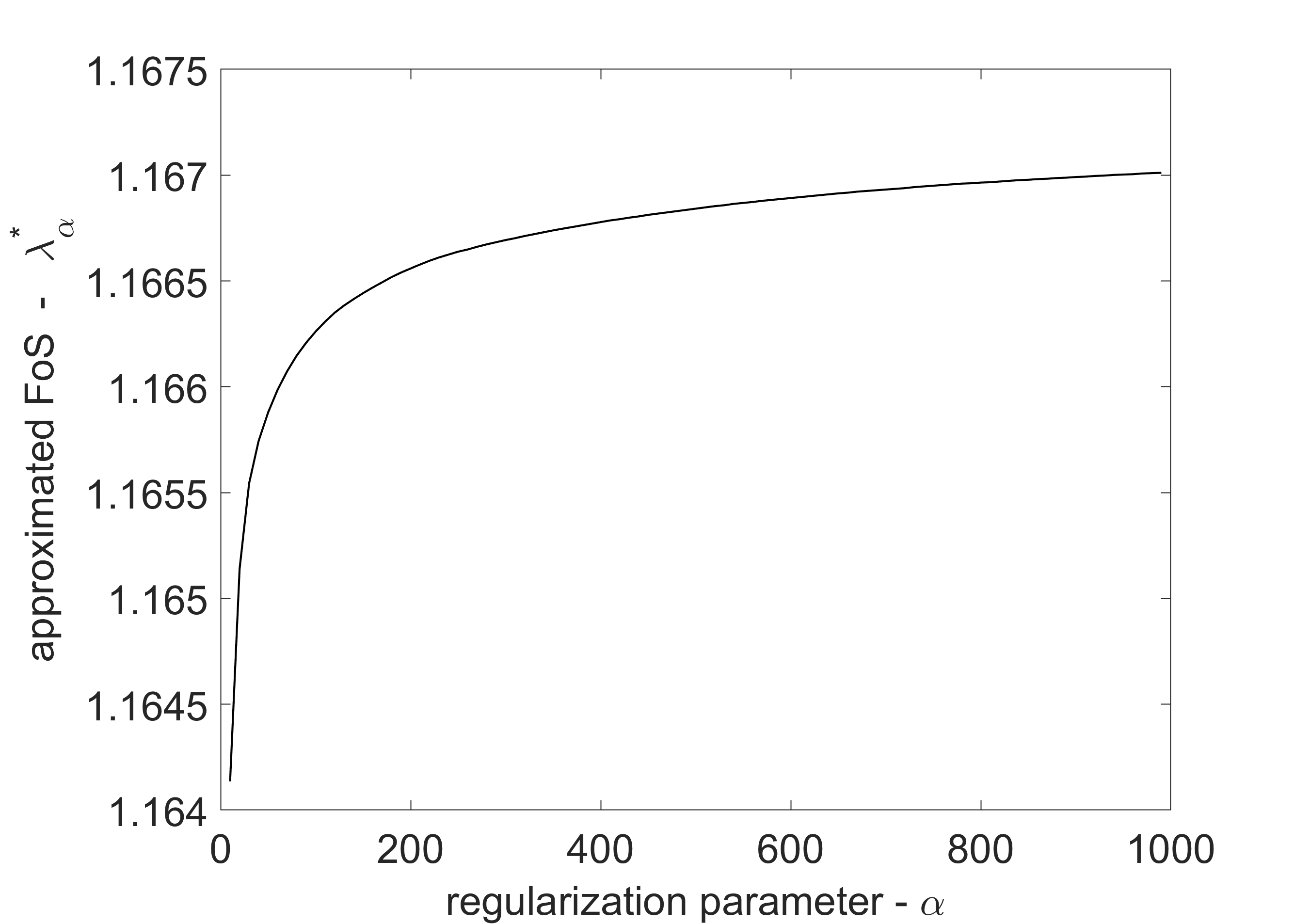}
		\includegraphics[width=0.45\textwidth]{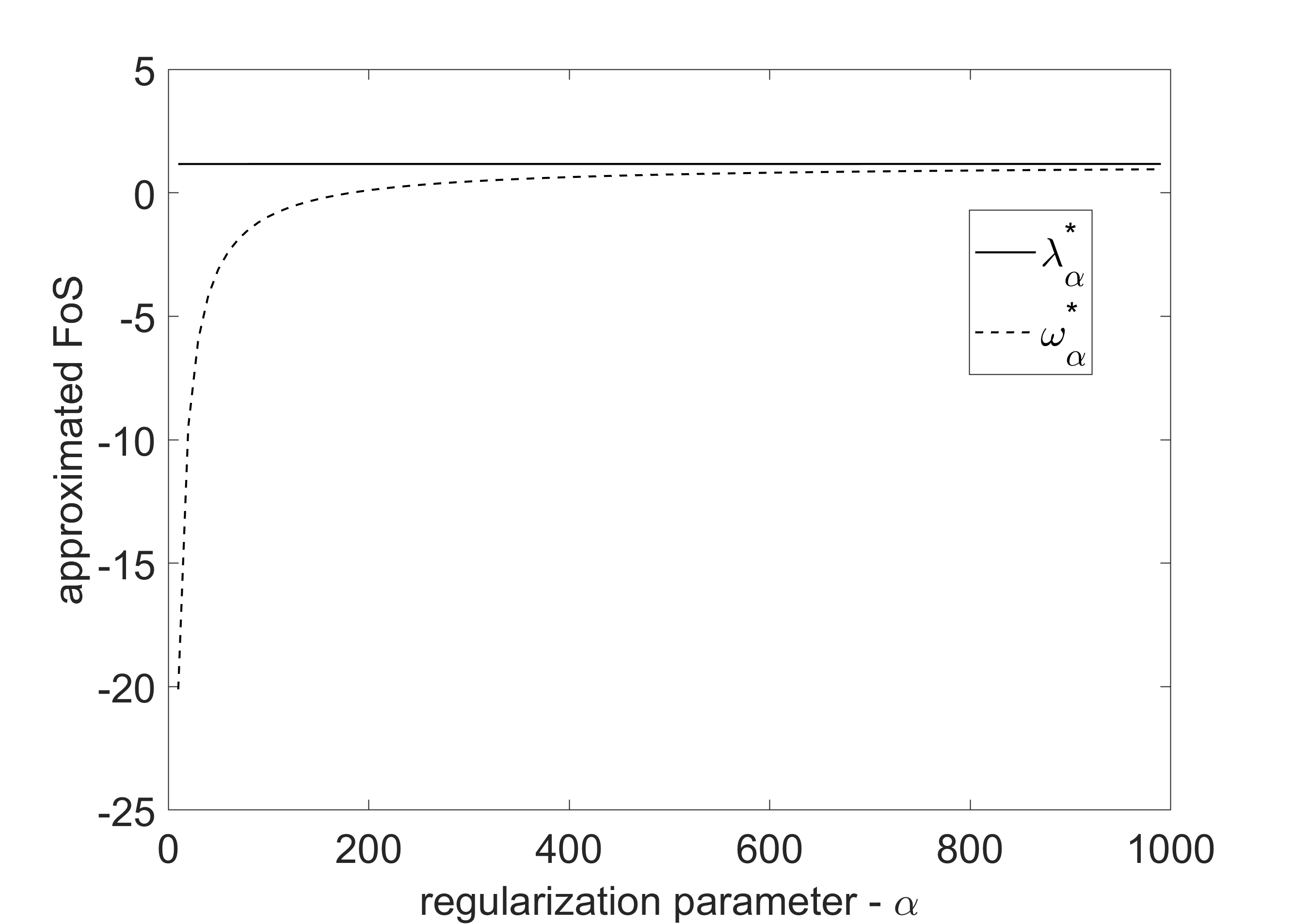}	
		
		\medskip
		(a) \hspace*{7cm} (b)
	\end{center}	
    \vspace{-5mm}
	\caption{(a) Dependence of $\lambda^*_\alpha$ on the regularization parameter $\alpha$. (b) Dependence of $\lambda^*_\alpha$ and $\omega^*_\alpha$ on $\alpha$ -- notice that the scale of the axis $y$ differs from case (a).} 
	\label{fig_alpha}
\end{figure}

\subsection{Case study -- heterogeneous slope from locality Doubrava-Kozinec}

The second example considers a real slope in Doubrava-Kozinec (near Karvina in the North-East part of the Czech Republic). This slope is located within a potentially unstable area with historical manifestations of landslide activity (within the quaternary clay layer). In Figure \ref{fig_scheme2}, the investigated slope including the soil conditions is illustrated. One can see that the slope is heterogeneous and consists of five soil layers. The particular materials and their parameters are specified in Table \ref{table_parameters}. It has to be mentioned that the sand and gravel layers in the investigated slope contain a small amount of silt to clay. For the quaternary clays, we used in our analysis the residual value of the friction angle (due to previous landslide activities). The level of groundwater is indicated by the blue line in Figure \ref{fig_scheme2}. We distinguish the specific weights $\gamma_{\mathrm{unsat}}$ and $\gamma_{\mathrm{sat}}$ for unsaturated and saturated materials, respectively. The values of the dilatancy angle were not available for us (as often the case in practical engineering), therefore, we set $\psi'=0^\circ$ for all materials. However, we also consider the associated case with $\psi'=\phi'$ in order to analyze the influence of the dilatancy angle. Notice that for these choices of $\psi'$, the Davis B and the Davis C approaches coincide.

During the evaluation of the slope, it turned out that the failure mechanism is located in the central part of the slope, or more precisely in quaternary clay and its interface with the clayed sand and the neogene clay layer. It is worth noticing that the effective friction angle of the quaternary clay is much lower than $\phi'$ of the other materials, thus it was expected that this layer is decisive for both, the obtained FoS and the computed failure mechanism. Numerical results presented below confirm the location of the failure mechanism.

\begin{figure}	
\begin{center}
	\includegraphics[width=1\textwidth]{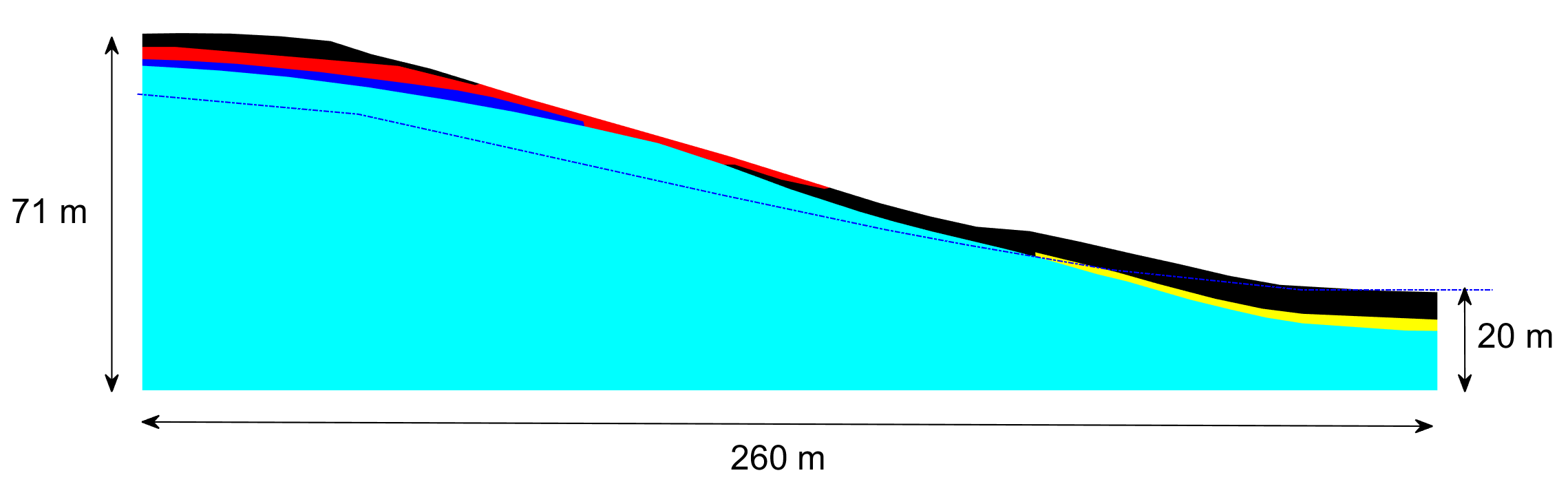}
\end{center}
\vspace*{-5mm}
\caption{Geometry of the case study.}
\label{fig_scheme2}
\end{figure}

\begin{table}[h]
\caption{Material parameters for the heterogeneous slope.}
\label{table_parameters}
\begin{center}	 
	\begin{tabular}{|c||c|c|c|c|c| }
		\hline
		& \crule[cyan]{3mm}{3mm} \scriptsize{neogene clay} & \crule[yellow]{3mm}{3mm} \scriptsize{gravel} & \crule{3mm}{3mm} \scriptsize{quaternary clay} & \crule[blue]{3mm}{3mm} \scriptsize{sand} & \crule[red]{3mm}{3mm} \scriptsize{clayed sand} \\
		\hline\hline
		$\phi'$ [$^\circ$]  &   26 & 45 &  13 & 33 & 27   \\
		\hline
		$c'$ [kPa] &   9 & 1 &  3 & 2 & 5  \\
		\hline
		$E$ [MPa] &   16 & 140 &  10 & 14 & 27  \\
		\hline
		$\nu$  &   0.40 & 0.20 &  0.40 & 0.28 & 0.35  \\
		\hline
		$\gamma_{\mathrm{unsat}}$ [kN/m$^3$] &   20.3 & 20.5 &  20.0 & 19.0 & 19.4  \\
		\hline
		$\gamma_{\mathrm{sat}}$ [kN/m$^3$] &   20.7 & 20.6 &  20.5 & 20.5 & 21.4  \\
		\hline
	\end{tabular}
\end{center}
\end{table}

The initial mesh for the computation in Comsol Multiphysics is depicted in Figure \ref{fig_mesh}. This mesh reflects the  heterogeneity of the soil conditions. This mesh has also been imported to Matlab. In Comsol Multiphysis, this mesh was then locally refined in the central part of the slope to obtain more accurate results in the region of interest (region of the expected failure surface). In Matlab, the original mesh was adaptively refined, where 15 mesh levels were considered. 

\begin{figure}	
	\begin{center}
		\includegraphics[width=0.6\textwidth]{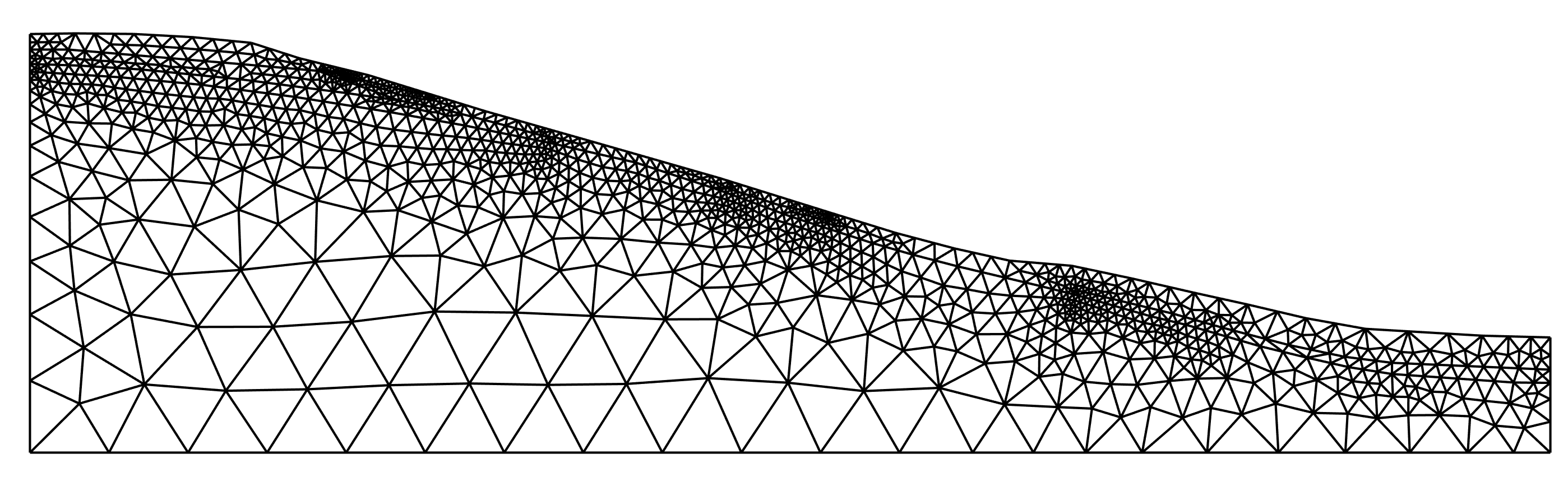}
	\end{center}
\vspace*{-5mm}
	\caption{Initial mesh for the case study (Comsol Multiphysics).}
	\label{fig_mesh}
\end{figure}

A detail of the finest Matlab mesh for $\psi=0^\circ$ and the Davis B approach is depicted in Figure \ref{fig_finest_mesh2} (together with the corresponding failure surface). For the visualization of this zone, a norm of the rate of the deviatoric strain was used. The black curves in the figure depict the soil stratification. One can see that failure mechanism is not very deep and that a large part of the slip surface lies on the transition of the quaternary clay layer to the neogene clay layer. This confirms again that this transition zone is decisive for the stability of the considered slope.

\begin{figure}	
	\begin{center}
		\includegraphics[width=0.4\textwidth]{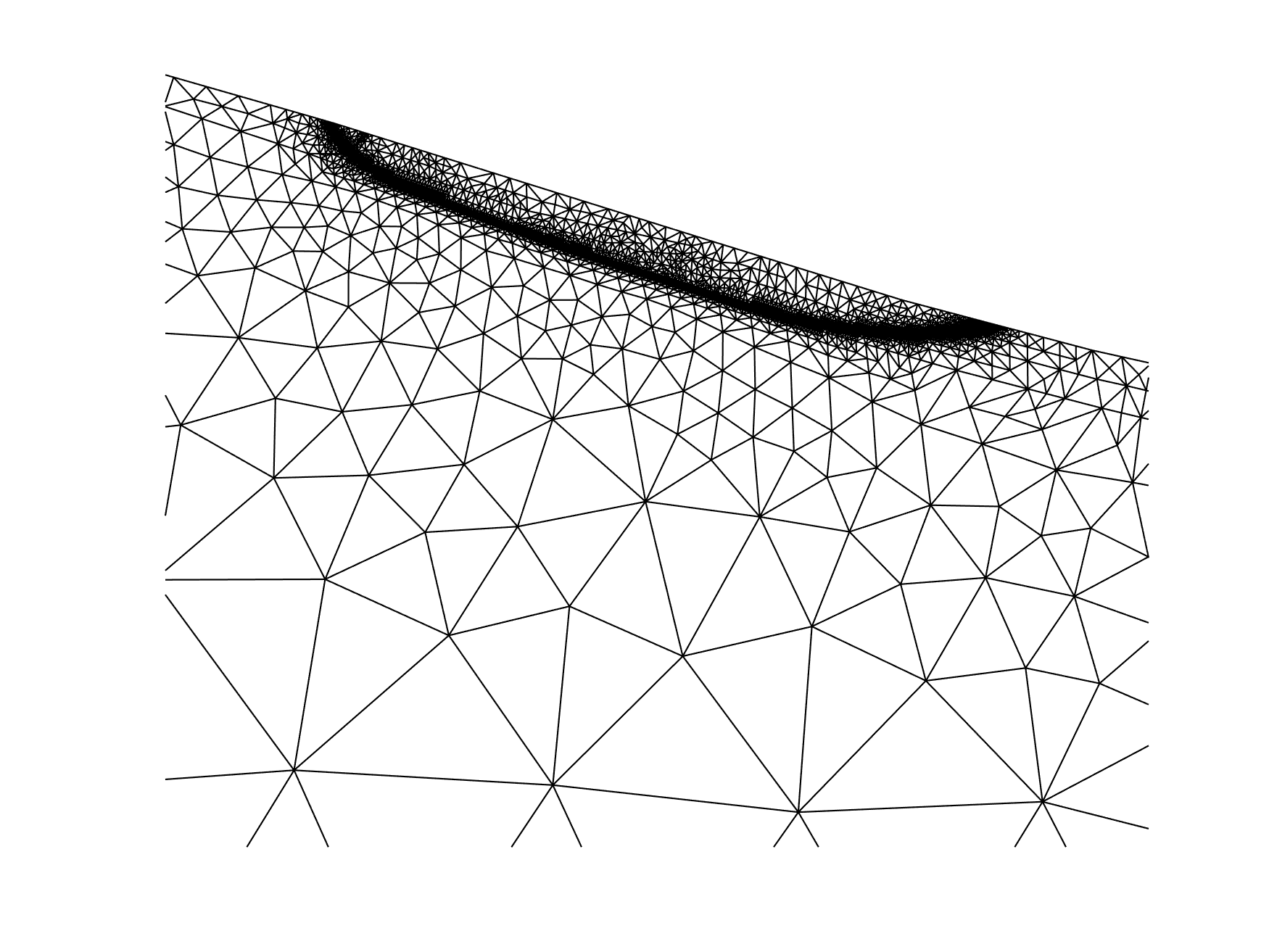}
		\includegraphics[width=0.42\textwidth]{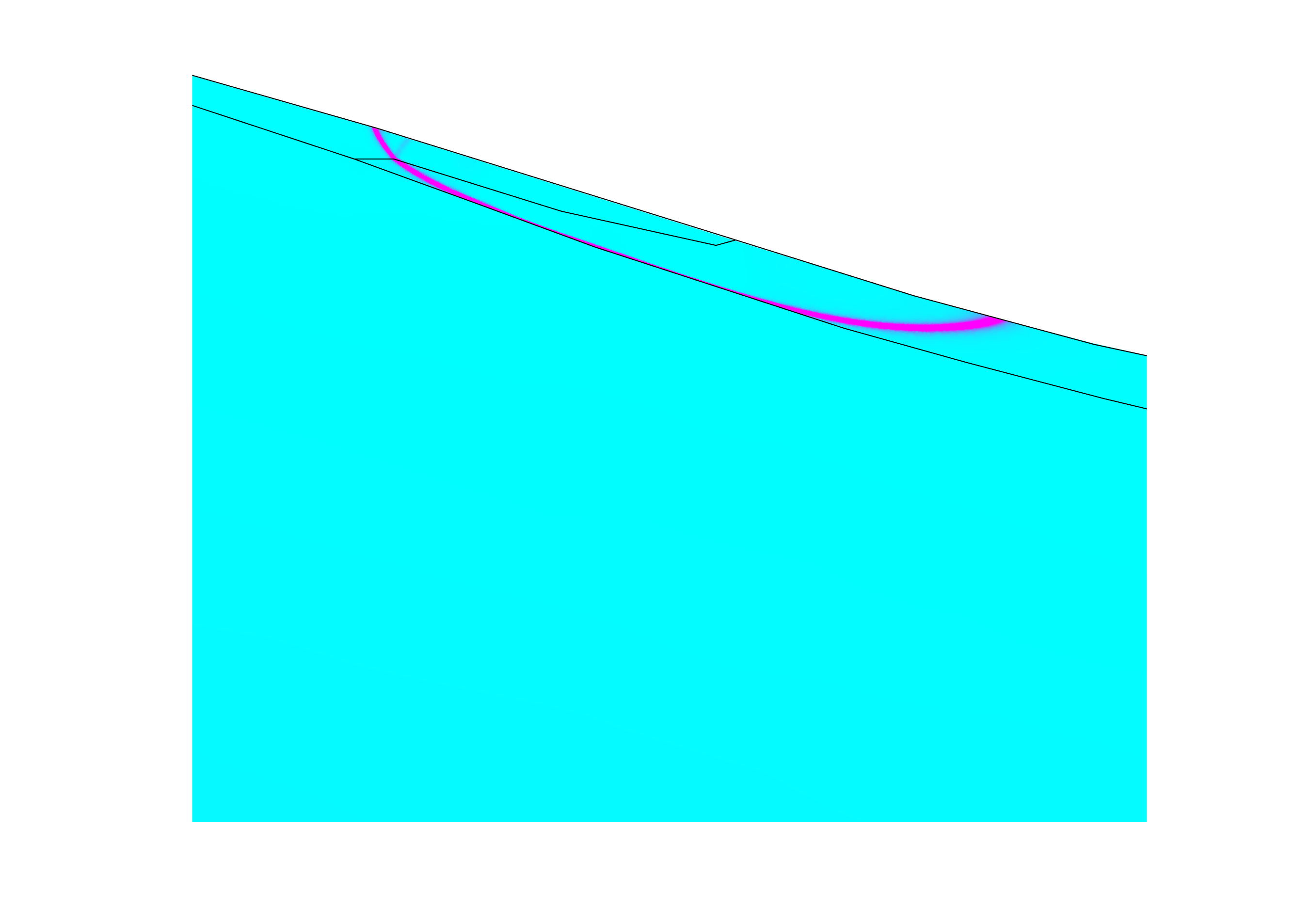}
	\end{center}
\vspace*{-5mm}
	\caption{Detail of the finest mesh used in Matlab (left) and the computed failure surface (right).}
	\label{fig_finest_mesh2}
\end{figure}

In Figure \ref{fig_adaptivity2}, we see the dependence of the safety factors on the mesh adaptivity computed in Matlab (for all three approaches). One can also see that these curves are practically constant after a sufficiently large number of mesh refinements (approximately 10). We also observe that the safety factor obtained with the Davis A approach is slightly lower that the one for Davis B. This is in accordance with Theorem \ref{theorem_FoS}.

\begin{figure}	
	\begin{center}
		\includegraphics[width=0.6\textwidth]{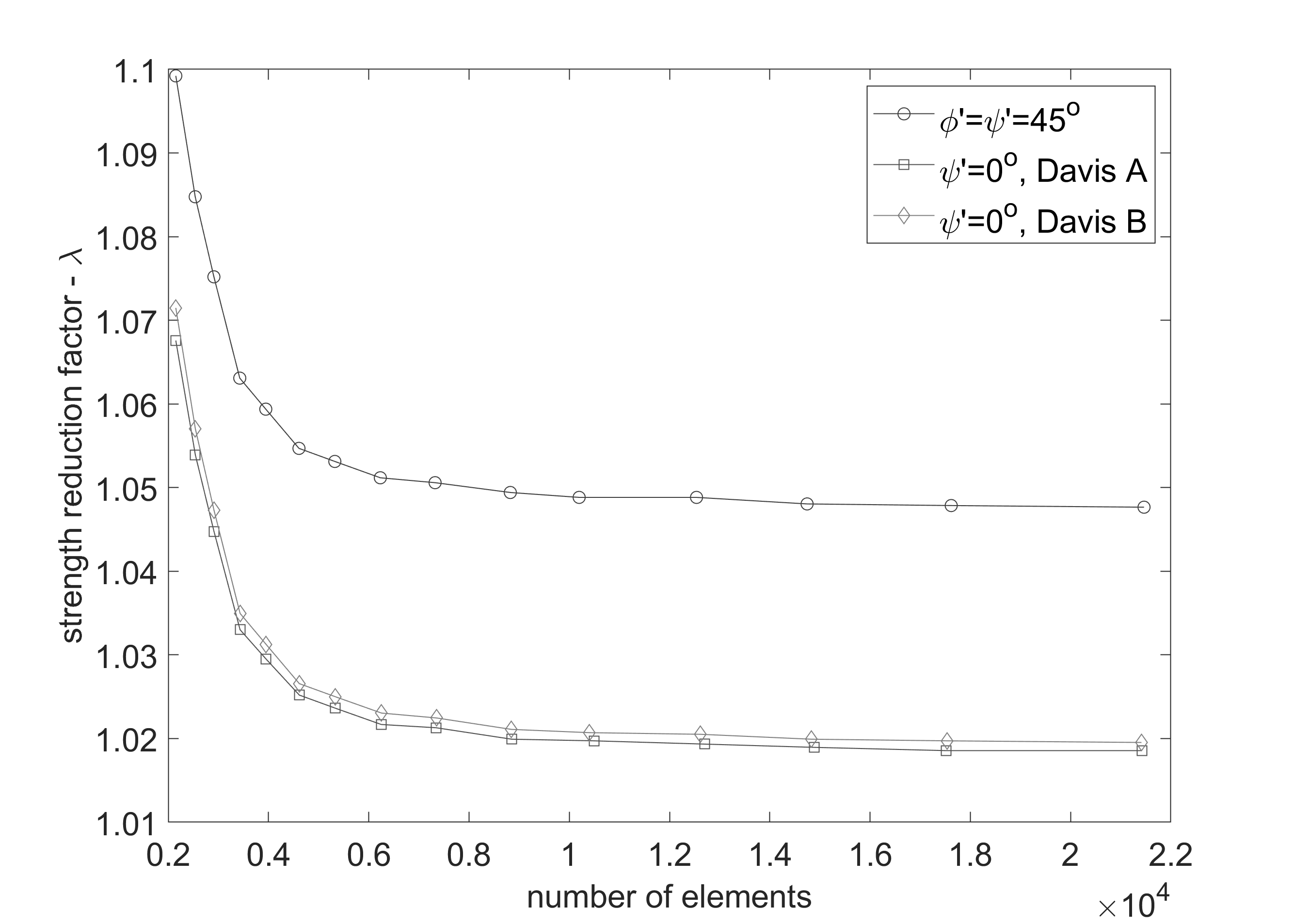}
	\end{center}
\vspace*{-5mm}
	\caption{Safety factors for the case study depending on mesh adaptivity.}
	\label{fig_adaptivity2}
\end{figure}

The computed safety factors for the different approaches are summarized in Table \ref{table_FoS2}. One can see that the safety factors are close to 1 for all investigated approaches indicating that the slope is close to its limit state, thus numerically close to failure. The observed failure of the slope can be explained by inhomogeneity of the quaternary clay layer. It can also be expected that in some parts of the slope the residual friction angle of quaternary clay is slightly lower than 13$^\circ$, which in turn causes a reduction of the factor of safety. The values computed with the in-house Matlab codes are slightly lower than other FoS values, that is due to the usage of the local mesh adaptivity. We also observe that FoS for the Davis A and Davis B approaches practically coincide and are very close to one. This observation is in accordance with the third statement of Theorem \ref{theorem_FoS}.

\begin{table}[htb]
	\caption{Safety factors for the case study using different approaches}
	\label{table_FoS2}
	\begin{center}	 
		\begin{tabular}{|c||c|c|c| }
			\hline
			& MATLAB & COMSOL & Plaxis  \\
			\hline\hline
			$\psi'=\phi'$, assoc. model &  1.05 & 1.09 & 1.08\\
			\hline
			$\psi'=0^\circ$, Davis A &  1.02 & 1.06 & 1.05\\
			\hline
			$\psi'=0^\circ$, Davis B & 1.02 & 1.06 & 1.06 \\
			\hline
			$\psi'=0^\circ$, current approach  &   -- & -- & 1.06 \\
			\hline
		\end{tabular}
	\end{center}
\end{table}

\section{Conclusion}
\label{sec_conclusion}

This work has been inspired by the recent paper \cite{TSS2015b}, where the standard shear strength reduction (SSR) method was approximated by modified Davis approaches and a parametrized limit analysis (LA) method. Based on these ideas, the paper presents an abstract optimization problem (OPT-SSR) related to the SSR method. Next, it is shown that the Davis A-C modifications suggested in \cite{TSS2015b} can be written in the form of the OPT-SSR problem. This fact has been verified on numerical investigations. 
The suggested approach has been completed by variational principles and duality theory, similarly as in limit analysis. Hence, the OPT-SSR approach can be interpreted as a rigorous method.

For the numerical solution, a regularization method has been used and combined with the finite element and the damped Newton method. This solution concept leads to similar solvers as standardly used in computational plasticity and thus can be easily implemented within existing elastic-plastic codes. In particular, in-house Matlab codes \cite{CSV19} in combination with local mesh adaptivity have been used. Softwares Plaxis and Comsol Multiphysics have been utilized for comparison of the results. One of the presented numerical examples can be classified as case history, since it deals with
a real slope.

{\bf Acknowledgment:} The authors acknowledge support for their work from the Czech Science Foundation (GA\v{C}R) through project No. 19-11441S. The authors also thank to Dr. Alexej Kolcun for fruitful discussions on mesh adaptivity for regular and irregular meshes.

\section{Appendix -- closed form of the function $D_1$}

For the sake of completeness, we introduce the closed form of the function
\begin{equation}
D_1(q(\lambda);\mbf \varepsilon)=\sup_{\substack{\sbs\in\mathbb R^{3\times 3}_{sym}\\ \Phi(q(\lambda);\sbs)\leq0}}\left[\mbf\sigma:\mbf\varepsilon-\frac{1}{2}\mathbb C^{-1}\mbf\sigma:\mbf\sigma\right].
\label{dissip_alpha1}
\end{equation}
from Section \ref{sec_regularization}. In literature (see, for example, \cite{NPO08} and\\ \cite{SCL17}), one can find closed form of the derivative $T_1$ of $D_1$ representing the stress-strain relation but not $D_1$. Therefore, we try to fill this gap.

Beside $\lambda\geq0$, this function also depends on the parameters $c'$, $\phi'$ and the elastic parameters $K$, $G$. We construct the function $D_1(q(\lambda);\mbf \varepsilon)$ only for such $\lambda$ satisfying $q(\lambda)=1$. For other choices of $\lambda$, it suffices to replace $c'$ and $\phi'$ with 
\begin{equation}
c_{q(\lambda)}:=\frac{c'}{q(\lambda)},\quad  \phi_{q(\lambda)}:=\arctan\frac{\tan\phi'}{q(\lambda)}.
\label{reduction2}
\end{equation}
see also \eqref{reduction}. To be in accordance with the derivation presented in \cite{SCL17}, we write the yield criterion $\Phi(1;\mbf\sigma)\leq0$ in the form
\begin{equation}
(1+\sin\phi')\sigma_1-(1-\sin\phi')\sigma_3-2 c'\,\cos\phi'\leq 0.
\label{MC3}
\end{equation}
Next, we assume that the strain tensor $\mbf\varepsilon$ is given and its eigenvalues $\varepsilon_1$, $\varepsilon_2$, $\varepsilon_3$ satisfy $\varepsilon_1\geq\varepsilon_2\geq\varepsilon_3$. Let $\mathrm{tr}\,\mbf\varepsilon=\varepsilon_1+\varepsilon_2+\varepsilon_3$ denote the trace of $\mbf\varepsilon$ and $\Lambda=\frac{1}{3}(3K-2G)$ denote the first Lam\'e coefficient. As in \cite{NPO08}, we distinguish five possible cases: the elastic response, the return to the smooth portion of the Mohr-Coulomb pyramid, the return to the left edge, the return to the right edge, and the return to the apex of the pyramid. 

\paragraph{The elastic response.} This case happen if the elastic stress $\mathbb C\mbf\varepsilon$ satisfies $\Phi(1;\mathbb C\mbf\varepsilon)\leq0$, that is,
\begin{equation}
2\Lambda(\mathrm{tr}\,\mbf\varepsilon)\sin\phi'+2G(1+\sin\phi')\varepsilon_1-2G(1-\sin\phi')\varepsilon_3-2 c\,\cos\phi'\leq 0.
\label{elast_case}
\end{equation}
Then 
\begin{equation}
D_1(1;\mbf \varepsilon)=\frac{1}{2}\mathbb C\mbf\varepsilon:\mbf\varepsilon=\frac{1}{2}\Lambda(\mathrm{tr}\,\mbf\varepsilon)^2+G(\varepsilon_1^2+\varepsilon_2^2+\varepsilon_3^2).
\end{equation}
If the criterion \eqref{elast_case} does not hold then the plastic response occurs and we distinguish four possible cases of the return to the Mohr-Coulomb pyramid. We use the following auxiliary notation:
$$\gamma_{s,l}=\frac{\varepsilon_1-\varepsilon_2}{1+\sin\phi'},\quad \gamma_{s,r}=\frac{\varepsilon_2-\varepsilon_3}{1-\sin\phi'},$$
$$\gamma_{l,a}=\frac{\varepsilon_1+\varepsilon_2-2\varepsilon_3}{3-\sin\phi'},\quad \gamma_{r,a}=\frac{2\varepsilon_2-\varepsilon_2-\varepsilon_3}{3+\sin\phi'}.$$

\paragraph{The return to the smooth portion.} This case happen if \eqref{elast_case} is not satisfied and 
\begin{equation}
q_s(\mbf\varepsilon)<S\min\{\gamma_{s,l},\gamma_{s,r}\},
\end{equation}
where 
$$q_s(\mbf\varepsilon)=2\Lambda(\mathrm{tr}\,\mbf\varepsilon)\sin\phi'+2G(1+\sin\phi')\varepsilon_1-2G(1-\sin\phi')\varepsilon_3-2 c'\,\cos\phi',$$
$$S=4\Lambda\sin^2\phi'+4G(1+\sin^2\phi').$$
Then,
\begin{equation}
D_1(1;\mbf \varepsilon)=\frac{1}{2}\Lambda(\mathrm{tr}\,\mbf\varepsilon)^2+G(\varepsilon_1^2+\varepsilon_2^2+\varepsilon_3^2)-\frac{1}{2S}q_s^2(\mbf\varepsilon).
\end{equation}

\paragraph{The return to the left edge.} This case happen if \eqref{elast_case} is not satisfied and 
\begin{equation}
\gamma_{s,l}<\gamma_{l,a},\quad L\gamma_{s,l}\leq q_l(\mbf\varepsilon)< L\gamma_{l,a},
\end{equation}
where 
$$q_l(\mbf\varepsilon)=2\Lambda(\mathrm{tr}\,\mbf\varepsilon)\sin\phi'+G(1+\sin\phi')(\varepsilon_1+\varepsilon_2)-2G(1-\sin\phi')\varepsilon_3-2 c'\,\cos\phi',$$
$$L=4\Lambda\sin^2\phi'+G(1+\sin\phi')^2+2G(1-\sin\phi')^2.$$
Then,
\begin{equation}
D_1(1;\mbf \varepsilon)=\frac{1}{2}\Lambda(\mathrm{tr}\,\mbf\varepsilon)^2+G\Big[\frac{1}{2}(\varepsilon_1+\varepsilon_2)^2+\varepsilon_3^2\Big]-\frac{1}{2L}q_l^2(\mbf\varepsilon).
\end{equation}

\paragraph{The return to the right edge.} This case happen if \eqref{elast_case} is not satisfied and 
\begin{equation}
\gamma_{s,r}<\gamma_{r,a},\quad R\gamma_{s,r}\leq q_r(\mbf\varepsilon)< R\gamma_{r,a},
\end{equation}
where 
$$q_r(\mbf\varepsilon)=2\Lambda(\mathrm{tr}\,\mbf\varepsilon)\sin\phi'+2G(1+\sin\phi')\varepsilon_1-G(1-\sin\phi')(\varepsilon_2+\varepsilon_3)-2 c'\,\cos\phi',$$
$$R=4\Lambda\sin^2\phi'+2G(1+\sin\phi')^2+G(1-\sin\phi')^2.$$
Then,
\begin{equation}
D_1(1;\mbf \varepsilon)=\frac{1}{2}\Lambda(\mathrm{tr}\,\mbf\varepsilon)^2+G\Big[\varepsilon_1^2+\frac{1}{2}(\varepsilon_2+\varepsilon_3)^2\Big]-\frac{1}{2R}q_r^2(\mbf\varepsilon).
\end{equation}

\paragraph{The return to the apex.} This case happen if \eqref{elast_case} is not satisfied and 
\begin{equation}
q_a(\mbf\varepsilon)\geq A\max\{\gamma_{l,a},\gamma_{r,a}\},
\end{equation}
where 
$$q_a(\mbf\varepsilon)=2K(\mathrm{tr}\,\mbf\varepsilon)\sin\phi'-2 c'\,\cos\phi',\quad A=4K\sin^2\phi'.$$
Then,
\begin{equation}
D_1(1;\mbf \varepsilon)=\frac{1}{2}K(\mathrm{tr}\,\mbf\varepsilon)^2-\frac{1}{2A}q_a^2(\mbf\varepsilon)=\frac{c'}{\tan\phi'}(\mathrm{tr}\,\mbf\varepsilon)-\frac{(c')^2}{2K\tan^2\phi'}.
\end{equation}

\section*{Nomenclatures}

The notation below is chronologically ordered.

\begin{tabular}{ll}
SSR & Shear strength reduction\\
OPT-SSR & Optimization variant of the strength reduction method\\
LA & Limit analysis\\
FoS & Factor of safety\\
$c'$ & Effective cohesion\\
$\phi'$ & Effective friction angle\\
$\psi'$ & Dilatancy angle\\
$\lambda$ & Control parameter for strength reduction, $\lambda\geq0$ \\
$c_\lambda$ & Reduced cohesion depending on $\lambda$\\
$\phi_\lambda$ & Reduced friction angle depending on $\lambda$\\
$\psi_\lambda$ & Reduced dilatancy angle depending on $\lambda$\\
$\mbf \sigma$ & Cauchy stress tensor\\
$\sigma_1, \sigma_3$ & Maximal and minimal principal stresses of $\mbf\sigma$\\
$\Phi$ & Mohr-Coulomb yield function \\
$\Omega$ & Bounded domain represented an investigated body\\
$\partial\Omega_f$ & A part of the boundary of $\Omega$\\
$\mbf n$ & The outward unit normal to the boundary of $\Omega$\\
$\mbf F, \mbf f$ & Volume and surface forces\\
$q(\lambda)$ & Function defining the strength reduction. We distinguish the following\\
& particular choices of $q$: $q_{ass}$, $q_{A}$, $q_{B}$, and $q_{C}$ for the associated model and\\
& for Davis A-C approaches.\\
$\omega^*$ & Factor os safety for the OPT-SSR method. We distinguish the following\\
& particular choices of $\omega^*$: $\lambda^*_{ass}$, $\lambda^*_{A}$, $\lambda^*_{B}$, and $\lambda^*_{C}$ for the associated model\\
& and for Davis A-C approaches.
\end{tabular}

\begin{tabular}{ll}
$\zeta^*$ & Load factor for the limit analysis method\\
$\ell(\lambda)$ & Factor of safety for limit analysis depending on the strength parameter $\lambda$\\
$V$ & Space of velocity field\\
$\Sigma$ & Space of stress fields\\
$\mbf v$ & Velocity field\\
$\mbf \varepsilon(\mbf v)$ & Strain-rate tensor depending on $\mbf v$\\
$L$ & Functional of external forces\\
$P_{q(\lambda)}$ & Set of plastically admissible stress fields\\
$\Lambda$ & Set of statically admissible stress fields\\
$D$ & Local dissipation function\\
$\alpha$ & Regularization parameter\\
$\omega^*_\alpha$, $\lambda^*_\alpha$ & Approximations of $\omega$ given by the regularization\\
$D_\alpha$ & Regularized local dissipation function, $D_1=D_\alpha$ for $\alpha=1$\\
$T_\alpha$ & Derivative of $D_\alpha$ with respect to the stress variable, $T_1=T_\alpha$ for $\alpha=1$\\
$\gamma$ & Unit weight for a homogeneous slope\\
$E$ & Young's modulus\\
$\nu$ & Poisson's ratio\\
$\gamma_{\mathrm{unsat}}$ & Specific weight for unsaturated material\\
$\gamma_{\mathrm{sat}}$ & Specific weight for saturated material\\

\end{tabular}



\begin{thebibliography}{00}
	
{\small
	

\bibitem[Brinkgreve and Bakker (1991)]{BB91}
Brinkgreve, R. B. J. \& Bakker, H. L. (1991). Non-linear finite
element analysis of safety factors. Proceedings of the international
conference on computer methods and advances in geomechanics,
pp. 1117--1122. Rotterdam, the Netherlands: Balkema.

\bibitem[Brinkgreve (2011)]{BSE11}
Brinkgreve, R. B. J., Swolfs, W. M. \& Engin, E. (2011). Plaxis 2D
2011 -- user manual. Delft, the Netherlands: Plaxis bv.
	
\bibitem[Cermak et al. (2015)]{CHKS15}
Cermak, M., Haslinger, J., Kozubek, T., Sysala, S. (2015). Discretization and numerical realization of contact problems for elastic‐perfectly plastic bodies. PART II–numerical realization, limit analysis. {\em ZAMM‐Journal of Applied Mathematics and Mechanics/Zeitschrift für Angewandte Mathematik und Mechanik}, 95(12), 1348--1371.

\bibitem[Cermak et al. (2018)]{CSV18}
\v Cerm\'ak, M., Sysala, S., \& Valdman, J. (2018). MATLAB FEM package for elastoplasticity,  https://github.com/matlabfem/matlab\_fem\_elastoplasticity.

\bibitem[Cermak et al. (2019)]{CSV19}
\v Cerm\'ak, M., Sysala, S., \& Valdman, J. (2019). Efficient and flexible MATLAB implementation of 2D and 3D elastoplastic problems. Applied Mathematics and Computation, 355, 595--614.

\bibitem[Chen and Liu (1990)]{CL90}
Chen, W. and Liu, X.L. (1990). {\em Limit Analysis in Soil Mechanics}.  Elsevier.

\bibitem[Christiansen (1996)]{Ch96}
Christiansen, E. (1996). {\em Limit analysis of colapse states}. In P.~G. Ciarlet and J.~L. Lions, editors, {\em Handbook of Numerical Analysis}, Vol IV, Part 2, North-Holland, 195--312.

\bibitem[Davis (1968)]{D68}
Davis, E. H. (1968). Theories of plasticity and failure of soil
masses. In Soil mechanics: selected topics (ed. I. K. Lee), pp.
341--354. New York, NY, USA: Elsevier.

\bibitem[Dawson et al. (1999)]{DRD99}
Dawson, E. M., Roth, W. H. \& Drescher, A. A. (1999). Slope
stability analysis by strength reduction. G\'eotechnique 49, No. 6,
835--840, http://dx.doi.org/10.1680/geot.1999.49.6.835.

\bibitem[Duncan (1996)]{D96}
Duncan, J. M. (1996). State of the art: limit equilibrium and finite-element analysis of slopes. Journal of Geotechnical engineering, 122(7), 577--596.

\bibitem[de Souza Neto et al. (2011)]{NPO08}
de Souza Neto, E. A., Peric, D., \& Owen, D. R. (2011). Computational methods for plasticity: theory and applications. John Wiley \& Sons.

\bibitem[Ekeland and Temam (1974)]{ET74}
Ekeland, I. and Temam, R. (1974).
\textit{Analyse Convexe et Probl\`emes Variationnels}. 
Dunod, Gauthier Villars, Paris.

\bibitem[Griffiths and Lane (1999)]{GL99}
Griffiths, D. V. \& Lane, P. A. (1999). Slope stability analysis by
finite elements. G\'eotechnique 49, No. 3, 387--403, http://
dx.doi.org/10.1680/geot.1999.49.3.387.

\bibitem[Hamlaoui et al. (2017)]{HQS17}
Hamlaoui, M., Oueslati, A., \& De Saxcé, G. (2017). A bipotential approach for plastic limit loads of strip footings with non-associated materials. International Journal of Non-Linear Mechanics, 90, 1--10.

\bibitem[Han and Reddy (2012)]{HR12}
Han, W., \& Reddy, B. D. (2012). Plasticity: mathematical theory and numerical analysis (Vol. 9). Springer Science \& Business Media.

\bibitem[Haslinger et al. (2016a)]{HRS15}
Haslinger, J., Repin, S., Sysala, S. (2016). A reliable incremental method of computing the limit load in deformation plasticity based on compliance: Continuous and discrete setting. {\em Journal of Computational and Applied Mathematics}  {303}, 156--170.
	
\bibitem[Haslinger et al. (2016b)]{HRS16}
Haslinger, J., Repin, S., Sysala, S (2016). Guaranteed and computable bounds of the limit load for variational problems with linear growth energy functionals. \textit{Applications of Mathematics} {61}, 527--564.

\bibitem[Haslinger et al. (2019)]{HRS19}
Haslinger, J., Repin, S., Sysala, S. (2019). Inf-sup conditions on convex cones and applications to limit load analysis. \textit{Mathematics and Mechanics of Solids} {24}, 3331--3353.

\bibitem[Hjiaj et al. (2003)]{HFS03}
Hjiaj, M., Fortin, J., \& de Saxcé, G. (2003). A complete stress update algorithm for the non-associated Drucker–Prager model including treatment of the apex. International Journal of Engineering Science, 41(10), 1109--1143.


\bibitem[Krabbenhoft et al. (2012)]{KKLS2012}
Krabbenhoft, K., Karim, M. R., Lyamin, A. V., \& Sloan, S. W. (2012). Associated computational plasticity schemes for nonassociated frictional materials. International Journal for Numerical Methods in Engineering, 90(9), 1089--1117.

\bibitem[Krabbenhoft et al. (2016)]{KLK16}
Krabbenhoft, K., Lyamin, A., \& Krabbenhoft, J. (2016). OptumG2: theory. Newcastle, Australia: Optum Computational Engineering.

\bibitem[Michalowski and Drescher (2009)]{MD09}
Michalowski, R.L., \& Drescher, A. (2009). Three-dimensional stability of slopes and excavations. Géotechnique, 59(10), 839--850.



\bibitem[Oberhollenzer et al. (2018)]{OTS18}
Oberhollenzer, S., Tschuchnigg, F., \& Schweiger, H. F. (2018). Finite element analyses of slope stability problems using non-associated plasticity. Journal of Rock Mechanics and Geotechnical Engineering, 10(6), 1091-1101.


\bibitem[Reddy and Sysala (2020)]{RS20}
Reddy, B.D. \& Sysala, S. (2020). Bounds on the elastic threshold for problems of dissipative strain-gradient plasticity. Journal of the Mechanics and Physics of Solids 143, 104089.

\bibitem[Repin et al. (2018)]{RSH18}
Repin, S., Sysala, S., Haslinger, J. Computable majorants of the limit load in Hencky's plasticity problems. \textit{Comp. \& Math. with Appl.} (2018) {75}: 199--217.

\bibitem[Schofield (2005)]{Sch05}
Schofield, A. N. (2005). Disturbed soil properties and geotechnical design. Thomas Telford.

\bibitem[Sloan (2013)]{Sl13}
Sloan SW (2013). Geotechnical stability analysis, {\em G\'eotechnique}, {63}, 531--572.

\bibitem[Sysala (2012)]{S12}
Sysala, S. (2012). Application of a modified semismooth Newton method to some elasto-plastic problems. Mathematics and Computers in Simulation, 82(10), 2004--2021.

\bibitem[Sysala (2014)]{S14}
Sysala, S. (2014). Properties and simplifications of constitutive time‐discretized elastoplastic operators. ZAMM‐Journal of Applied Mathematics and Mechanics/Zeitschrift für Angewandte Mathematik und Mechanik, 94(3), 233--255.

\bibitem[Sysala et al. (2019)]{SBKSSP19}
Sysala, S., Blaheta, R., Kolcun, A., Ščučka, J., Souček, K., \& Pan, P. Z. (2019). Computation of Composite Strengths by Limit Analysis. Key Engineering Materials, 810, 137--142. 

\bibitem[Sysala et al. (2017)]{SCL17}
Sysala, S., Čermák, M., \& Ligurský, T. (2017). Subdifferential‐based implicit return‐mapping operators in Mohr‐Coulomb plasticity. ZAMM‐Journal of Applied Mathematics and Mechanics/Zeitschrift für Angewandte Mathematik und Mechanik, 97(12), 1502--1523.

\bibitem[Sysala et al. (2015)]{SHHC15}
Sysala, S., Haslinger, J., Hlaváček, I., Cermak, M. (2015). Discretization and numerical realization of contact problems for elastic‐perfectly plastic bodies. PART I–discretization, limit analysis. {\em ZAMM‐Journal of Applied Mathematics and Mechanics/Zeitschrift für Angewandte Mathematik und Mechanik}, {95}(4), 333--353.

\bibitem[Sysala et al. (2016)]{SCKKZB16}
Sysala, S., Cermak, M., Koudelka, T., Kruis, J., Zeman, J., \& Blaheta, R. (2016). Subdifferential‐based implicit return‐mapping operators in computational plasticity. ZAMM‐Journal of Applied Mathematics and Mechanics/Zeitschrift für Angewandte Mathematik und Mechanik, 96(11), 1318--1338.

\bibitem[Temam (1985)]{T85}
Temam, R. (1985). {\em Mathematical Problems in Plasticity}.
Gauthier-Villars, Paris.

\bibitem[Tschuchnigg et al. (2015a)]{TSSLR2015}
Tschuchnigg, F., Schweiger, H.F., Sloan, S.W., Lyamin, A.V., \& Raissakis, I. (2015). Comparison of finite-element limit analysis and strength reduction techniques. Géotechnique, 65(4), 249--257.

\bibitem[Tschuchnigg et al. (2015b)]{TSS2015b}
Tschuchnigg, F., Schweiger, H.F., \& Sloan, S.W. (2015). Slope stability analysis by means of finite element limit analysis and finite element strength reduction techniques. Part I: Numerical studies considering non-associated plasticity. Computers and Geotechnics, 70, 169--177.

\bibitem[Tschuchnigg et al. (2015c)]{TSS2015c}
Tschuchnigg, F., Schweiger, H.F., \& Sloan, S.W. (2015). Slope stability analysis by means of finite element limit analysis and finite element strength reduction techniques. Part II: Back analyses of a case history. Computers and Geotechnics, 70, 178--189.

\bibitem[Vermeer and De Borst (1984)]{VB84}
Vermeer, P.A., \& De Borst, R. (1984). Non-associated plasticity for soils, concrete and rock. HERON, 29 (3), 1984.

\bibitem[Yu (2006)]{Y06}
Yu, H.-S. (2006). Plasticity and Geotechnics, Springer Science+Bussiness Media, New York.

\bibitem[Zienkiewicz et al. (1975)]{ZHL1975}
Zienkiewicz, O.C., Humpheson, C., and Lewis, R.W. (1975). Associated and non-associated visco-plasticity and plasticity in soil mechanics. G\'eotechnique, 25(4), 671--689.
}


\end{thebibliography}
\end{document}